\newtheorem*{rep@theorem}{\rep@title}
\newcommand{\newreptheorem}[2]{%
	\newenvironment{rep#1}[1]{%
		\def\rep@title{#2 \ref{##1}}%
		\begin{rep@theorem}}%
		{\end{rep@theorem}}}
\newenvironment{lemma-repeat}[1]{\begin{trivlist}
		\item[\hspace{\labelsep}{\bf\noindent Lemma \ref{#1} }]\em }%
	{\end{trivlist}}
\newenvironment{theorem-repeat}[1]{\begin{trivlist}
		\item[\hspace{\labelsep}{\bf\noindent Theorem \ref{#1} }]\em }%
	{\end{trivlist}}
\newcommand{\qedsymb}{\qed}
\newtheorem{theorem}{Theorem}[section]
\newtheorem{claim}[theorem]{Claim}
\newtheorem{lemma}[theorem]{Lemma}
\newcounter{challenge}[section]
\DeclareMathOperator{\poly}{poly}
\newcommand{\remove}[1]{}
\theoremstyle{remark}
\theoremstyle{definition}
\newcommand{\Sstar}{S^\ast}
\newcommand{\Cset}{\mathbb{C}}
\newcommand{\Cstarset}{\mathbb{C}^\ast}
\newcommand{\E}{\mathbf{E}}
\newcommand{\vol}{\textup{vol}}
\DeclarePairedDelimiter{\floor}{\lfloor}{\rfloor}
\newcommand{\ip}[1]{\left}
\newcommand{\myn}{\bar{n}}
\newcommand{\congest}{\ensuremath{\mathsf{CONGEST}}\xspace}
\newcommand{\clique}{\ensuremath{\mathsf{CONGESTED~CLIQUE}}\xspace}
\newcommand{\mix}{\ensuremath{\tau_{\operatorname{mix}}}}
\newcommand{\mainTheoremText}{For each $p\ge 4$, the $K_p$-listing problem can be solved in $\tilde{O}(n^{1 - 2/p})$ rounds with high probability in the \congest model.}
\newcommand{\polylog}{\operatorname{polylog}}
\newcommand{\ID}{\operatorname{ID}}
\newcommand{\avgdeg}{\mu}
	\title{
		Tight Distributed Listing of Cliques
	}
\author{
Keren Censor-Hillel \\
\small Technion \\
\and
Yi-Jun Chang\\
\small ETH Z\"{u}rich \\
\and
Fran\c{c}ois Le Gall\\
\small Nagoya University \\
\and
Dean Leitersdorf\\
\small Technion  
 }
\begin{document}
\date{}
\maketitle
\thispagestyle{empty}
\setcounter{page}{0}

	\begin{abstract}
Much progress has recently been made in understanding the complexity landscape of subgraph finding problems in the \congest model of distributed computing. However, so far, very few tight bounds are known in this area. For triangle (i.e., 3-clique) listing, an optimal $\tilde{O}(n^{1/3})$-round distributed algorithm has been constructed by Chang et al.~[SODA 2019, PODC 2019]. Recent works of Eden et al.~[DISC 2019] and of Censor-Hillel et al.~[PODC 2020] have shown sublinear algorithms for $K_p$-listing, for each $p \geq 4$, but still leaving a significant gap between the upper bounds and the known lower bounds of the problem.

In this paper, we completely close this gap. We show that for each $p \geq 4$, there is an $\tilde{O}(n^{1 - 2/p})$-round distributed algorithm that lists all $p$-cliques $K_p$ in the communication network. Our algorithm is \emph{optimal} up to a polylogarithmic factor, due to the $\tilde{\Omega}(n^{1 - 2/p})$-round lower bound of Fischer et al.~[SPAA 2018], which holds even in the \clique model. Together with the triangle-listing algorithm by Chang et al.~[SODA 2019, PODC 2019], our result thus shows that the  round complexity of $K_p$-listing, for all $p$, is the same in both the \congest and \clique models, at $\tilde{\Theta}(n^{1 - 2/p})$ rounds.

For $p=4$, our result additionally matches the $\tilde{\Omega}(n^{1/2})$ lower bound for $K_4$-\emph{detection} by Czumaj and Konrad [DISC 2018], implying that the round complexities for detection and listing of $K_4$ are equivalent in the \congest model.

	\end{abstract}
	
	\newpage

\section{Introduction}
\emph{Subgraph detection} and \emph{listing} are fundamental graph problems that have been extensively studied in various computational models~\cite{Alon+SIDMA08,becchetti+KDD08,Eden+SICOMP17,hu+JCSS16,LeGall+FOCS14,Shun+IDCE15}. In this paper, we focus on the \congest model of distributed computing, where the communication network is identical to the $n$-node input graph $G=(V,E)$. In this model, each node $v \in V$ represents a computing device, and each edge $e \in E$ represents a communication link.  Each node $v \in V$ has an $O(\log n)$-bit unique identifier $\ID(v)$. The communication proceeds in synchronous rounds. In each round, each node $v$ can send an $O(\log n)$-bit message along each edge $e$ incident to $v$. 

Given a fixed graph $H$, the  \emph{$H$-detection} problem requires that at least one node in the network detects a copy of $H$ if the underlying network $G$ contains $H$ as a subgraph, and the  \emph{$H$-listing} problem requires that each subgraph $H$ of $G$ is detected by some node in the network.

\subsection{Prior Work on Distributed Clique Listing}
The $K_p$-listing problem, for all $p$, can be solved trivially in $\tilde{O}(\Delta)$ rounds by having each node $v$ broadcast the list $\{ \ID(u) \ | \ u \in N(v) \}$ to all its neighbors $N(v)$, where $\Delta < n$ is the maximum degree of the graph. The first breakthrough in this area
is by
Izumi and Le Gall~\cite{Izumi+PODC17}, who showed that $K_3$-detection and listing can be solved in $\tilde{O}(n^{2/3})$ rounds and $\tilde{O}(n^{3/4})$ rounds, respectively. 

Later, Chang et al.~\cite{Chang+SODA19} and Chang and Saranurak~\cite{Chang+PODC19}  brought the round complexity down to $\tilde{O}(n^{1/3})$, matching the $\tilde{\Omega}(n^{1/3})$ lower bound~\cite{Izumi+PODC17,Pandurangan+SPAA18} by a polylogarithmic factor. The main idea underlying the approach of~\cite{Chang+SODA19,Chang+PODC19} is \emph{expander decompositions}. An expander decomposition removes $\epsilon$-fraction of the edges so that
the remaining connected components have conductance at least $\phi$. It was shown in~\cite{Chang+SODA19,Chang+PODC19} that an expander decomposition with  parameters $\epsilon = 1 / \polylog (n)$ and $\phi = 1 / \polylog (n)$ can be constructed in $O(n^{0.001})$ rounds.  

Once an expander decomposition is constructed, we can apply \emph{expander routing}~\cite{Ghaffari+PODC17,Ghaffari+DISC18} to each high-conductance cluster $C$ in the expander decomposition. Specifically, after $O(n^{0.001})$ rounds of pre-processing, within $\poly(\phi^{-1}, \log n)$ rounds we can let each $v \in C$  communicate with any \emph{arbitrary} $\deg_{C}(v)$ nodes in $C$, not just the local neighbors of $v$~\cite{Chang+PODC19}, where $\deg_{C}(v)$ is the number of neighbors of $v$ in $C$. 

Using expander routing, $K_p$-listing can be solved in $\tilde{O}(n^{1 - 2/p})$ rounds on graphs with conductance $\phi = 1/\polylog (n)$~\cite{Chang+SODA19}. Based on this result, the $K_3$-listing algorithm of~\cite{Ghaffari+PODC17,Ghaffari+DISC18} works as follows. Construct an expander decomposition to partition the nodes into high-conductance clusters. For each cluster $C$, use expander routing to list all $K_3$ involving at least one edge in $C$, in parallel, in $\tilde{O}(n^{1/3})$ rounds.
Then recurse on the subgraph induced by the remaining inter-cluster edges.  

Things become complicated when $p \geq 4$, as $K_p$ can involve edges in more than one cluster in this case.
The first sublinear $K_p$-listing algorithms for $p = 4$ and $p = 5$ were given by  Eden et al.~\cite{Eden+DISC19}.
To deal with the cross-cluster clique instances, they classify the nodes outside a cluster $C$ into \emph{heavy} nodes and \emph{light} nodes based on the number of neighbors in $C$. Each heavy node $v$ has sufficiently many neighbors in $C$ so that $v$ has enough bandwidth to send its entire list of neighbors to $C$ efficiently. For each light node $v$, it only needs to send its list of neighbors restricting to those in $C$, and this can be done efficiently since a light node only has a small number of neighbors in~$C$. Choosing the threshold of the classification properly, this information gathering can be done in sublinear rounds. After this step, each cluster $C$ contains all the edges that can potentially form a clique $K_p$ with existing edges in $C$. Finally,  they apply the expander routing to each cluster $C$ to solve the $K_p$-listing problem.
Using this approach, they showed that $K_4$ and $K_5$ can be listed in $\tilde{O}(n^{5/6})$ and $\tilde{O}(n^{21/22})$ rounds, respectively.

Recently, Censor-Hillel et al.~\cite{Censor+PODC20} showed that  $K_p$-listing can be solved in sublinear rounds for \emph{all}~$p$. 
One major shortcoming of the approach of~\cite{Eden+DISC19} is that the number of edges sent to a cluster~$C$ can be significantly larger than the number of edges in $C$. As the total bandwidth in expander routing depends on the number of edges in $C$, this makes the listing algorithm inefficient. To overcome this issue, Censor-Hillel et al.~\cite{Censor+PODC20} proposed the following \emph{arboricity decreasing} framework, which is based on a variant of the expander decomposition considered in~\cite{Chang+SODA19} that allows an additional small arboricity part $E_s$.

The expander decomposition of~\cite{Chang+SODA19} partitions the edge set $E$ into three parts $E_m$, $E_s$, and $E_r$. The set $E_m$ represents the edges inside a high-conductance cluster. The set $E_s$ induces a subgraph of arboricity at most $n^{\delta}$, where $0 < \delta < 1$ is a given parameter. The set $E_r$ is the   remaining edges, and it satisfies $|E_r| < |E|/6$. In this decomposition, each cluster $C$ of $E_m$ not only has conductance $1 / \polylog (n)$ but also each node $v$ in $C$ has $\Omega(n^{\delta})$ neighbors in $C$. Such a decomposition can be constructed in $\tilde{O}(n^{1 - \delta})$ rounds~\cite{Chang+SODA19}.

The idea of~\cite{Censor+PODC20} is to first recurse on $E_r$. After $E_r$ is empty, recurse on $E_s$ with a slightly smaller parameter $\delta$, and so the arboricity of the graph is gradually decreasing during the process. This approach ensures that during the process, the minimum degree in a cluster under consideration is within a small factor to the overall arboricity of the graph.
They showed that $K_p$-listing can be solved in $\tilde{O}(n^{2/3})$ rounds for $p = 4$ and $\tilde{O}(n^{p/(p+2)})$ rounds for $p \geq 5$. 

This still leaves a significant gap between the upper bounds of~\cite{Censor+PODC20} and the   $\tilde{\Omega}(n^{1 - 2/p})$-round lower bound of Fischer et al.~\cite{Fischer+SPAA18}.

\subsection{New Result}

In this paper, we prove the following theorem, which completely closes this gap.

\begin{theorem}
	\label{theorem:Kp}
	\mainTheoremText
\end{theorem}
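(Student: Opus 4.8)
The plan is to refine the arboricity-decreasing framework of Censor-Hillel et al.~\cite{Censor+PODC20}, using the same expander decomposition into $E_m \cup E_s \cup E_r$, but to analyze the cross-cluster clique instances much more tightly so that listing all $K_p$'s costs only $\tilde{O}(n^{1-2/p})$ rounds per recursion level rather than $\tilde{O}(n^{p/(p+2)})$. The key quantitative handle is the arboricity: if the current graph has arboricity $a$, then the number of edges is $O(na)$ and a cluster $C$ with minimum degree $\Omega(a)$ supports expander routing with total bandwidth $\tilde{\Theta}(|E(C)|) = \tilde{\Theta}(n_C \cdot a)$ where $n_C = |C|$. The first step is to run the decomposition so that each cluster has minimum internal degree $\Omega(a)$ (up to polylog factors), recurse on $E_r$ until it is empty, and then on $E_s$ with a decremented parameter, exactly as in~\cite{Censor+PODC20}; the running time then telescopes over $O(\log n)$ geometrically-decreasing values of $a$, so it suffices to bound the cost at a single level in terms of $a$ and $n$.

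The heart of the argument is the cross-cluster phase. Fix a cluster $C$ and partition its vertices into $t := \lceil n_C \cdot a / n \rceil$ groups each of size roughly $n/a$ (this is the granularity at which a full adjacency list of an outside vertex $v$, truncated to neighbors in a group, fits in one expander-routing token since $\deg(v) = O(n)$ trivially but we only care about the $O(a)$ ``heavy directions''). For each outside vertex $v$, classify it relative to each group $C_i$ as heavy (many neighbors in $C_i$, so it can afford to broadcast its whole neighborhood to $C_i$ via its edges into $C_i$) or light (few neighbors in $C_i$, so it sends only the short restricted list). Choosing the heavy/light threshold as $\Theta(n_C a / n \cdot \text{polylog})$ balances the two costs; the crucial accounting is that the \emph{total} number of (vertex, group) incidences that need to be processed is $O(m)$ where $m$ is the number of edges incident to $C$, and $\sum_C m_C = O(na)$. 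After this exchange, each cluster $C$ holds every edge that could extend an internal edge of $C$ to a $K_p$, and a final round of expander routing inside $C$ lists all such cliques in $\tilde{O}(n^{1-2/p})$ rounds by the result of~\cite{Chang+SODA19} quoted above (here one uses $|E(C)| = \tilde{O}(n_C a)$ and $n_C a \le n \cdot a$ with the generic bound $a \le n$, but more carefully the per-cluster cost is $\tilde{O}((n_C a)^{1-2/p} \cdot \text{something})$ which sums correctly by convexity).

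The main obstacle I expect is precisely this load-balancing / convexity bookkeeping: one must show that splitting a cluster into groups of size $\approx n/a$ and routing truncated adjacency lists does not blow up the total communication beyond $\tilde{O}(na)$ \emph{tokens}, and that these tokens can be delivered in $\tilde{O}(n^{1-2/p})$ rounds of expander routing across all clusters simultaneously — the subtlety being that a single outside vertex $v$ may be light for many groups but the sum of its short lists is still only $O(\deg_C(v))$, while if it is heavy for a group it pays in proportion to its edge-count into that group, so a global charging argument (charging each transmitted token to an edge of $G$) is needed, combined with Hölder/Jensen to convert $\sum_C (n_C a)^{1-2/p}$ into $n^{1-2/p} \cdot \poly\log$ using $\sum_C n_C \le n$. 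A secondary technical point is handling $p$-cliques that use inter-cluster edges from $E_r$ or low-arboricity edges from $E_s$: these are deferred to deeper recursion levels, and one must check that every $K_p$ of $G$ has at least one edge that is ``internal'' at some level, so that it is listed exactly when that level is processed — this is immediate from the fact that every edge eventually lands in some $E_m$ as $a$ shrinks to a constant.
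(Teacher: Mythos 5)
Your proposal keeps the arboricity-decreasing framework of Censor-Hillel et al.\ and hopes that tighter accounting of the heavy/light exchange plus a convexity argument brings the per-level cost down to $\tilde{O}(n^{1-2/p})$, but it leaves the central obstacle untouched. After the gathering phase a cluster $C$ holds a set $E'$ of relevant external edges whose size can vastly exceed $|E_C|$, while the bandwidth available for routing inside $C$ is only $\tilde{\Theta}(|E_C|)$ per round. The listing routine of Chang et al.\ that you invoke at the end partitions the vertices into $n^{1/p}$ equal parts and is optimal only when the edge set being listed essentially coincides with the edge set used for communication; applied to the inflated instance it does not finish in $\tilde{O}(n^{1-2/p})$ rounds, and the remark ``$\tilde{O}((n_C a)^{1-2/p}\cdot\text{something})$ which sums correctly by convexity'' is not a substitute for an argument --- the clusters run in parallel, so the difficulty is the per-cluster cost, not the sum over clusters. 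The paper's actual solution is a new sparsity-aware listing theorem: partition $V_C$ into $a$ parts and $V\setminus V_C$ into a \emph{different} number $b$ of parts, with $a,b$ chosen as functions of $|E_C|,|E'|,|V_C|$, and prove a multi-edge-set partition lemma bounding the number of edges between any two parts. This works only under the extra hypotheses $|E_C|/|V_C|=\Omega(|E'|/n)$ and $|E_C|/|V_C|=\Omega(n^{1/2})$, which forces the surrounding algorithm (pre-removing nodes of degree at most $2n^{1/2}$, discarding small clusters, deferring low-average clusters) to be engineered so that these conditions hold. None of this mechanism appears in your sketch, and without it there is no reason the $n^{p/(p+2)}$ barrier of the arboricity framework falls.

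There is a second gap in the gathering step itself. A light node's restricted adjacency list informs $C$ only of edges incident to $C$; it says nothing about an edge between two outside vertices that are both light, yet such an edge can complete a $K_p$ with an edge of $E_C$. Your global charging argument charges transmitted tokens to edges of $G$ incident to $C$ and therefore cannot pay for these light--light external edges at all. The paper handles them by letting each $v\in V_C$ with at most $n^{1-2/p}$ light neighbors exhaustively learn the edges among those neighbors, and deferring the edges inside the set $S_C$ of bad vertices to later iterations; the counting showing that the deferred edges are a constant fraction of $|E|$ works only for $p\ge 5$, and $p=4$ requires an entirely separate algorithm built on recursive expander decompositions and a pairwise cluster-to-cluster protocol. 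Your proposal is silent on both of these points.
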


More precisely, our result matches the $\tilde{\Omega}(n^{1 - 2/p})$ lower bound for $K_p$-listing of Fischer et al.~\cite{Fischer+SPAA18} up to a polylogarithmic factor, and it also matches the $\tilde{\Omega}(n^{1/2})$ lower bound for $K_p$-detection of Czumaj and Konrad~\cite{Czumaj+DISC18} for the case of $p = 4$.

Prior to this work, there were only two  known non-trivial tight bounds in the area of distributed subgraph \emph{listing}: $\tilde{\Theta}(n^{1/3})$ for triangles $K_3$~\cite{Chang+PODC19} and $\tilde{\Theta}(n)$ for 4-cycles $C_4$~\cite{Eden+DISC19}.

The \clique model is a variant of \congest that allows all-to-all communication in each round.
The lower bound of~\cite{Fischer+SPAA18} also holds in the \clique model, and so our result shows that the round complexity of $K_p$-listing is the same in both \congest and \clique up to a polylogarithmic factor, and 
this 
implies that allowing distant nodes to communicate directly does not allow us to list cliques much faster. 
This statement is not true for many other subgraphs. For  $C_4$, detection and listing can be solved in $O(1)$ and $\tilde{O}(n^{1/2})$ rounds in \clique, respectively~\cite{Censor-Hillel+DC19}, but these problems have much higher lower bounds in \congest: $\tilde{\Omega}(n^{1/2})$ for detection~\cite{Drucker+PODC14,Korhonen+OPODIS17} and $\tilde{\Omega}(n)$ for listing~\cite{Eden+DISC19}.

\subsection{Technical Overview}
As discussed earlier, there are two main challenges in dealing with cross-cluster cliques for the case of $p \geq 4$.
The first challenge is that  each cluster $C$ needs to efficiently list all $K_p$ with at least one edge inside $C$. The main difficulty here is that the  number of \emph{relevant} edges outside of $C$ that can form a $K_p$ with existing edges in $C$ can be much higher than the number of edges in $C$. In other words, the problem size for subgraph listing can be much higher than the number of edges that can be used in communication.
The second challenge is that each cluster $C$ needs to efficiently gather all the needed information from outside of  $C$ into $C$. Specifically, we need to let $C$ learn each edge outside of $C$ that can potentially form a $K_p$ instance with edges in $C$.

\paragraph{Optimal sparsity-aware listing.} 
Our main technical ingredient to deal with the above first challenge is an \emph{optimal} sparsity-aware listing algorithm (Theorem~\ref{th:spars} in Section~\ref{section:partition})
We show that the task of listing all instances of $K_p$ with at least one edge in $C$ can still be solved {optimally} in $\tilde{O}(n^{1 - 2/p})$ rounds, as long as the \emph{average degree} in $C=(V_C, E_C)$ satisfies some mild requirements. 
This result works even in the case the number of edges outside of $C$ is \emph{much higher} than the number of edges in $C$.
For comparison, the previous attempts in coping with this challenge mostly only focus on limiting the number of edges outside of $C$ to consider~\cite{Censor+PODC20,Eden+DISC19}.

Before explaining our algorithm, let us review
 the $K_p$-listing algorithm on a high-conductance graph $G=(V,E)$ in~\cite{Chang+SODA19}. Partition the node set $V$ into $n^{1/p}$ parts, so that the number of $p$-tuples of parts is $n$. Associate each node $v \in V$ with $O(\deg(v) \cdot |V| / |E|)$ $p$-tuples, and each node $v$ is responsible for listing all $K_p$ corresponding to the $p$-tuples assigned to $v$. A simple calculation shows that if the partition is done randomly, then each node needs to gather $\deg(v) \cdot O(n^{1 - 2/p})$ edges \emph{in expectation} to fulfill its clique listing task. A more complicated calculation shows that the actual number of edges is concentrated around its expectation, and so we can use the expander routing algorithm of~\cite{Ghaffari+DISC18,Ghaffari+PODC17} to solve the $K_p$-listing problem in $\tilde{O}(n^{1 - 2/p})$ rounds.
Note that this algorithm does not give us the optimal round complexity in our setting, as  the number of edges  under consideration for $K_p$-listing can be much higher than the number of edges that can be used in communication.

The new idea in this paper is to exploit the varying sparsity of different edge sets.
Instead of partitioning the nodes into $n^{1/p}$ parts with roughly equal size, we allow  different parameters for the number of parts for $V_C$ and $V \setminus V_C$ in the random partition.
We show that if the average degree $|E_C| / |V_C|$ of $C$ is sufficiently high, and if each node $v \in V_C$ has $1/\polylog (n)$ fraction of its neighbors in $C$, then 
we can still achieve the optimal round complexity   $\tilde{O}(n^{1 - 2/p})$.

For this to work, we need to show good bounds on the number of edges that we can have between any two  parts in the random partition. To this end, we prove a \emph{partition lemma} (Lemma~\ref{lemma:partition} in Section~\ref{section:partition}), which generalizes~\cite[Lemma 4.2]{Chang+SODA19} to multiple edge sets.

\paragraph{Efficient transmission of edges across clusters.} For the above second challenge, in this paper we adapt an approach similar to the classification of nodes into heavy ones and light ones in~\cite{Eden+DISC19}.
Considering a cluster $C$, we want to gather all the edges that can potentially form a $K_p$ with edges in $C$.
Define $S^\ast$ as the set of nodes $v \in V \setminus V_C$ such that   $\deg_{V \setminus V_C}(v) = \Omega(n^{1 - 2/p} \cdot \deg_C(v))$, i.e., the number of neighbors of $v$ outside $C$ is at least $\Omega(n^{1 - 2/p})$ times the number of neighbors of $v$ inside~$C$. We call the nodes in $S^\ast$ the \emph{light} nodes. Observe that each \emph{heavy} node $v \in (V \setminus V_C) \setminus S^\ast$ can send all its incident edges to $C$ in ${O}(n^{1 - 2/p})$ rounds, due to its high number of neighbors in $C$.

We focus on the light nodes in the subsequent discussion.
Define $S$ as the subset of $V_C$ such that $v \in S$ if $v$ has $\Omega(n^{1 - 2/p})$ neighbors in $S^\ast$. Note that each $v \in V_C \setminus S$ can learn all the relevant edges incident to the light nodes that can potentially form a $K_p$ with $v$.
Therefore, the only \emph{bad edges} that we cannot deal with are the edges contained in $S$. 
We will show that it is possible to assume that each cluster has at least $\Omega(n^{1 - 2/p})$ nodes, and so there are at most $O(n^{2/p})$ clusters. A calculation reveals that whenever $p \geq 5$, the set of bad edges over all clusters constitute at most a constant fraction of $E$.
By deferring dealing with these bad edges to subsequent iterations, we are done listing all instances of $K_p$ after $O(\log n)$ iterations. Section~\ref{section:Kp} contains the above proof, thus we obtain Theorem~\ref{theorem:Kp} for $p \geq 5$.

A different strategy is needed to deal with $K_4$, as in this case we cannot obtain a good bound on the number of bad edges.  We construct an expander decomposition recursively on the subgraph induced by the inter-cluster edges to cover all edges by high-conductance clusters. Now we only need to consider  $K_4=\{v_1, v_2, v_3, v_4\}$ instances crossing a cluster $C$ at the top-level expander decomposition and some other other cluster $C^\ast$ in the sense that $\{v_1, v_2\} \in E_C$ and $\{v_3, v_4\} \in E_{C^\ast}$.
We go over all possible pairs of $C$ and $C^\ast$ in parallel to let $C$ learn the edges in $C^\ast$ that can potentially form a $K_4$ with edges in $C$, by applying our above approach, replacing $V \setminus V_C$ by $V_{C^\ast} \setminus V_C$. 

The advantage of this new strategy is that it allows us to deal with the bad edges as follows.
It is possible to show that the bad edges in $C$ can be sent to $C^\ast$ efficiently. Moreover, the subgraph induced by the bad edges is sufficiently sparse that we can apply our sparsity-aware listing algorithm  to $C^\ast$ to list all the cross-cluster $K_4$ associated with the bad edges in  $\tilde{O}(n^{1 - 2/p})$ rounds.
Section~\ref{section:k4} contains the above proof, thus we obtain Theorem~\ref{theorem:Kp} for $p = 4$.

\subsection{Additional Related Work}

While our work solves $K_p$-listing for all $p \geq 4$ in optimal round complexity due to the lower bound of Fischer et al.~\cite{Fischer+SPAA18}, for the \emph{$K_p$-detection} problem,  the only lower bound known is due to Czumaj and Konrad~\cite{Czumaj+DISC18}, who showed that $\tilde{\Omega}(n^{1/2})$ rounds are needed for $K_p$ detection for all $4\leq p \leq n^{1/2}$ and that $\tilde{\Omega}(n/p)$ rounds are needed for $K_p$ detection for all $p \geq n^{1/2}$.

For cycles, Drucker et al.~\cite{Drucker+PODC14} showed that for fixed $p \geq 4$, $C_p$-detection requires $\Omega(\operatorname{ex}(n,C_p)/n))$ rounds, where $\operatorname{ex}(n,H)$ is the \emph{Tur\'{a}n number} that counts the maximum number of edges that an $n$-node graph can have without containing a subgraph isomorphic to $H$. Therefore, we have  a lower bound of $\tilde{\Omega}(n)$ for detecting $C_p$ when $p$ is odd, and we have a lower bound of $\tilde{\Omega}(n^{1/2})$ for the case $p=4$. 
 Korhonen and Rybicki~\cite{Korhonen+OPODIS17} extended this result to  make the $\tilde{\Omega}(n^{1/2})$ lower bound apply for all even $p$. They also showed that $C_p$-detection can be solved in $\tilde{O}(n)$  rounds for any constant $p$, implying that for constant odd values $p$ the complexity for $C_p$-detection is $\tilde{\Theta}(n)$. For even-length cycles, Fischer et al.~\cite{Fischer+SPAA18} showed that $C_{2p}$-detection can be solved in $O(n^{1-1/(p(p-1))})$ rounds. Eden et al.~\cite{Eden+DISC19} later improved this result to $\tilde{O}(n^{1-2/(p^2-p+2)})$ rounds for odd $p \geq 3$, and at most $\tilde{O}(n^{1-2/(p^2-2p+4)})$ rounds for even $p \geq 4$.
 Using expander decompositions, Eden et al.~\cite{Eden+DISC19} demonstrated a barrier to proving lower bounds for even-length cycle detection.
 There is a constant $\delta \in (0, 1/2)$ such that   any $\Omega(n^{(1/2)+\delta})$ lower bound on $C_{2p}$-detection would imply a new circuit lower bound.

Subgraph detection beyond cliques and cycles were also considered in the literature~\cite{Even+DISC17,Fischer+SPAA18,Gonen+OPODIS17,Korhonen+OPODIS17}.
Fischer et al.~\cite{Fischer+SPAA18} constructed a family  of graphs $H_p$ with $p$ nodes such that $H_p$-listing requires    $\Omega(n^{2-1/p} /p)$ rounds. Later, Eden et al.~\cite{Eden+DISC19} showed that for any $p$-node graph $H$,  the $H$-detection problem can be solved in $n^{2 - \Omega(1/p)}$ rounds, almost matching the lower bound of Fischer et al.~\cite{Fischer+SPAA18}.

Variants of subgraph finding problems requiring subgraphs to be reported by its constituent nodes were considered in~\cite{Abboud+arxiv17,Fischer+SPAA18,Huang+SODA20,Izumi+PODC17}. This additional requirement changes the nature of the problem drastically. For example,   the \emph{local} triangle listing problem, which requires  each $K_3$ to be reported by one of its three constituent nodes, requires $\tilde{\Omega}(n)$ rounds~\cite{Izumi+PODC17}, while the standard triangle listing problem has round complexity $\tilde{\Theta}(n^{1/3})$.
For the simpler problem that asks each node to decide whether it belongs to a triangle, 
any \emph{one-round} deterministic algorithm requires messages of 
 size $\Omega(\Delta \log n)$~\cite{Abboud+arxiv17}, matching the trivial upper bound. For the randomized setting, there is an $\Omega(\Delta)$ lower bound~\cite{Fischer+SPAA18} for the same problem.  From the upper bound side, Huang et al.~\cite{Huang+SODA20} showed that the \emph{local} triangle listing problem  can be solved in $O(\Delta / \log n + \log \log \Delta)$ rounds with high probability, matching the $\Omega(\Delta / \log n)$ lower bound of Izumi and Le~Gall~\cite{Izumi+PODC17} whenever $\Delta > \log n\log\log\log n$.

Quantum algorithms for distributed triangle detection have   been proposed recently by Izumi et~al.~\cite{Izumi+STACS20}, where they showed that triangle detection can be solved in $\tilde{O}(n^{1/4})$ rounds in the quantum version of \congest. This gives another example of a quantum algorithm beating the best known classical algorithms in distributed computing, as the current best known upper bound for distributed triangle detection in the classical setting is $\tilde{O}(n^{1/3})$~\cite{Chang+PODC19}.

\section{Preliminaries}\label{section:prelim}
We denote the set of neighbors of $v$ by $N(v)$.
For a node $v$ and a set of nodes $S$, we denote by $\deg_S(v)$ the number of neighbors that $v$ has in $S$.
Given a graph $G=(V,E)$ and two subsets $S,S'\subseteq V$, let $E(S,S')\subseteq E$ denote the set of edges with one extremity in $S$ and the other extremity in $S'$. 

The \emph{conductance} of a cut $(S, V \setminus S)$ is defined as $\Phi(S) = |\partial(S)| /  \min\{\vol(S), \vol(V \setminus S)\}$, where $\vol(U) = \sum_{v \in U} \deg(v)$, and  $\partial(S) = E(S, V\setminus S)$ is the set of edges between $S$ and $V \setminus S$. The conductance of a graph $G = (V, E)$, denoted by $\Phi(G)$, is defined as the minimum of the conductance of each cut in the graph, that is, $\Phi(G) = \min_{S \subseteq V}\Phi(S)$.
A \emph{lazy random walk} of a graph $G=(V,E)$ is a random walk on $V$ such that in each step, with probability $1/2$ it stays at the same node, and with probability $1/2$ it moves to a neighbor of the node, chosen uniformly at random.
We have the following relation~\cite{JerrumSICOMP89} between the 
mixing time $\mix(G)$ and conductance $\Phi(G)$: 
\[
\Theta\left(\frac{1}{\Phi(G)}\right) \leq \mix(G) \leq \Theta\left(\frac{\log n}{\Phi^2(G)}\right).
\]
Let $S$ be a node set. We write $G[S]$ to denote the subgraph induced by $S$, and we write $G\{S\}$ to denote the graph resulting from adding $\deg_V(v) - \deg_S(v)$ self loops to each node $v$ in  $G[S]$, where  each self loop of $v$ contributes 1 in the calculation of $\deg(v)$.
Note   that we always have
\[\Phi(G\{S\}) \leq \Phi(G[S]).\]

An \emph{$(\epsilon, \phi)$-expander decomposition} of a graph $G = (V,E)$ is  a partition of the node set $V = V_1 \cup V_2 \cup \cdots \cup V_k$ satisfying the following conditions.
\begin{itemize}
    \item For each cluster $V_i$, we have $\Phi(G\{V_i\}) \geq \phi$.
    \item The number of inter-cluster edges is at most $\epsilon |E|$.
\end{itemize} 

\begin{theorem}[Expander decomposition~\cite{Chang+PODC19}]\label{thm-expander-decomposition}
An $(\epsilon, \phi)$-expander decomposition with $\epsilon = 1/\polylog (n)$ and $\phi = 1/\polylog (n)$   can be constructed in $O(n^{0.001})$ rounds with high probability.
\end{theorem}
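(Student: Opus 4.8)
The plan is to prove Theorem~\ref{thm-expander-decomposition} by combining the recursive ``certify-or-cut'' framework for expander decomposition with a \congest implementation of a single balanced-sparse-cut subroutine, and then to break the resulting circularity with one bootstrapping step. I would first isolate the following local primitive: given a node set $U$ inducing a current cluster, in $T$ rounds the primitive either certifies $\Phi(G\{U\}) \ge \phi$, or returns a cut $(A, U\setminus A)$ of $G\{U\}$ whose conductance is below $\phi\cdot\polylog(n)$ and which is \emph{not too unbalanced} --- meaning its smaller side has volume at least $\vol(U)/\polylog(n)$, unless $U$ is already nearly expanding so that only a tiny sparse side can be peeled off. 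Given this primitive, the decomposition follows the Saranurak--Wang recursion: start with $U=V$; run the primitive on all current clusters in parallel; if a balanced sparse cut is returned, split the cluster and recurse on both sides; if only a tiny sparse side appears, ``trim'' it off, leaving a genuine expander. Nearly balanced cuts force the recursion depth to be $O(\log n)$, and a standard charging argument (each removed boundary edge charged to the volume of the smaller side it was cut from) bounds the total inter-cluster edges by $\epsilon|E|$ with $\epsilon = 1/\polylog(n)$. The overall round count is $O(\log n)\cdot T$.

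For the primitive itself I would use the cut-matching game of Khandekar--Rao--Vazirani, which reduces ``balanced sparse cut or expander certificate'' to $O(\log^2 n)$ iterations, each of which must route a (fractional) perfect matching between two halves of $U$ inside $G\{U\}$ with low congestion, or detect that the routing is obstructed. In \congest I would realize each routing by simulating short lazy random walks of length $\ell = \Theta(\phi^{-2}\log n)$: by the mixing bound $\mix(G) \le \Theta(\log n / \Phi^2(G))$ recalled in the preliminaries, $\ell$ steps suffice for a $\phi$-expander to mix, so if the walk endpoints spread out we extract the matching (hence an expansion certificate), and if they concentrate on a small set we read a low-conductance cut off the endpoint distribution, in the style of the Spielman--Teng / Andersen--Chung--Lang ``nibble.'' Simulating the many parallel walks needed per iteration costs $\tilde O(\ell)$ rounds by the usual token-congestion bound, so $T = \poly(\phi^{-1}, \log n)$ \emph{provided the relevant routings are fast}.

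The main obstacle --- and the reason for the $O(n^{0.001})$ bound rather than a polylogarithmic one --- is precisely that proviso: fast routing inside $U$ presupposes that $U$ already mixes quickly, which is what the procedure is trying to establish, and a naive walk simulation on a not-yet-expanded cluster can be as slow as its (unknown, possibly linear) diameter. I would break this circularity in two steps, following~\cite{Chang+PODC19}: (i) invoke the slower construction of~\cite{Chang+SODA19} once as a black box to obtain an $(\epsilon, \psi)$-expander decomposition with a much weaker conductance $\psi = n^{-\delta}$ in $\tilde O(n^{1-\delta}) = O(n^{0.001})$ rounds, for a tiny constant $\delta$; (ii) \emph{boost} each weak cluster from conductance $\psi$ up to $1/\polylog(n)$ by re-running the certify-or-cut recursion \emph{within} that cluster, where now every routing is over a graph that already has mixing time $\poly(\psi^{-1})$ and can therefore be executed efficiently using the expander-routing primitive of~\cite{Ghaffari+PODC17,Ghaffari+DISC18} (which, after $O(n^{0.001})$ preprocessing, lets each node $v$ of the weak cluster communicate with arbitrary $\deg(v)$ cluster-mates in $\poly(n^{\delta},\log n)$ rounds). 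The boosting recursion has depth $O(\log n)$ and each of its primitive calls costs $\poly(n^{\delta}, \log n) = O(n^{0.001})$ rounds, so the whole construction fits in $O(n^{0.001})$ rounds. The remaining work is bookkeeping: checking that the conductance, balance, and edge-charging guarantees compose across the three nested recursions (the top-level~\cite{Chang+SODA19} call, the Saranurak--Wang split/trim recursion, and the conductance-boosting recursion) without degrading the $1/\polylog(n)$ parameters or the $O(n^{0.001})$ round budget, and handling the high-probability claim by a union bound over the $O(n\log n)$ clusters ever considered.
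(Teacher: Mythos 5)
You should first note that the paper does not prove this statement at all: Theorem~\ref{thm-expander-decomposition} is imported verbatim from~\cite{Chang+PODC19} and used as a black box, so there is no in-paper argument to compare against. Judged as a reconstruction of the cited proof, your first two paragraphs (the certify-or-cut recursion with an $O(\log n)$ depth/charging argument, and a cut-matching-game primitive realized by lazy random walks) are a fair sketch of the right skeleton. The problem is step (ii) of your circularity-breaking, which is the only place the $O(n^{0.001})$ bound could actually come from, and it does not work as written.

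Two concrete issues. First, the arithmetic is inverted: $\tilde O(n^{1-\delta})$ for a \emph{tiny} constant $\delta$ is nearly linear, not $O(n^{0.001})$; to force $n^{1-\delta}\le n^{0.001}$ you would need $\delta\ge 0.999$. Second, the object you propose to bootstrap from is not what~\cite{Chang+SODA19} provides. That decomposition is a three-way edge partition $E=E_m\cup E_s\cup E_r$ in which the clusters of $E_m$ already have conductance $1/\polylog(n)$ -- the parameter $\delta$ there governs the minimum in-cluster degree $\Omega(n^{\delta})$ and the arboricity $n^{\delta}$ of the leftover part $E_s$, not a weak conductance $\psi=n^{-\delta}$ to be boosted. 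With $\delta$ close to $1$ (as the round bound would require), essentially the entire graph can fall into $E_s$, which is covered by no cluster at all, so there is nothing to boost and no decomposition of the whole edge set results. The actual route in~\cite{Chang+PODC19} avoids any such bootstrapping: they implement the nearly-most-balanced sparse-cut / cut-matching primitive directly in \congest with a trade-off parameter $k$, obtaining $O(n^{2/k}\cdot\poly(1/\phi,\log n))$ rounds while $\phi$ degrades with $k$ but remains $1/\polylog(n)$ for constant $k$; choosing $k$ a sufficiently large constant yields the $O(n^{0.001})$ bound. As it stands, your proposal's round-complexity claim rests on a step that is both numerically and structurally incorrect.
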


Note that we have the conductance guarantee not only for $G[C]$ but also for $G\{C\}$, for each cluster $C=(V_C,E_C)$ in the expander decomposition. In particular, each $v \in V_C$ with $|V_C| > 1$ must have at least an $1/\polylog (n)$ fraction of its neighbors in $C$, since otherwise $G\{C\}$ cannot have mixing time $\polylog (n)$.

\begin{theorem}[Expander routing~\cite{Chang+PODC19,Ghaffari+PODC17}]\label{thm-expander-routing}
Suppose $\mix(G) = \polylog (n)$.
There is an $O(n^{0.001})$-round algorithm that pre-processes the graph such that for any subsequent routing task where each node $v$ is a source and a destination of at most $L \cdot \deg(v)$ messages of $O(\log n)$ bits, all messages can be delivered in $L \cdot \polylog (n)$ rounds with high probability. 
\end{theorem}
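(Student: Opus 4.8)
The plan is to reduce an arbitrary routing task on the $n$-node graph $G$ to a constant number of nested routing tasks on smaller but still well-mixing graphs, tied together by bounded-degree routing networks that are embedded into $G$ once during preprocessing. Concretely, I would build a balanced hierarchical decomposition of $V$: partition the current vertex set $U$ into $k$ parts $U_1,\dots,U_k$ of nearly equal size with $k$ a small polynomial $n^{\epsilon}$, then recurse inside each part, stopping once a part has only $\polylog(n)$ nodes; since $k=n^{\epsilon}$, the recursion has only $O(1/\epsilon)=O(1)$ levels, which turns out to be essential. For each internal tree node I would, during preprocessing, embed into $G$ a bounded-degree, $O(\log n)$-diameter ``routing graph'' $H$ on the super-nodes $U_1,\dots,U_k$, realizing each super-edge $(U_i,U_j)$ by a family of short (length $\polylog(n)$) low-congestion paths in $G$ between $U_i$ and $U_j$. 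Because $G$ --- and, crucially, every part produced along the way --- has mixing time $\polylog(n)$, such path families exist and can be computed efficiently, so a single round of communication over $H$ (carrying an amount of data proportional to the current load) can be emulated in $\polylog(n)$ rounds of $G$.

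Two lemmas carry the argument. First, a \emph{splitting lemma}: a graph with mixing time $\polylog(n)$ can be partitioned into $k$ balanced parts so that (i) each part, equipped with self-loops in the style of $G\{\cdot\}$, still has mixing time $\polylog(n)$, and (ii) every vertex keeps a $1/\polylog(n)$ fraction of its degree inside its own part. Short lazy random walks together with the expansion of the current graph, plus some flow rerouting to rescue vertices that would otherwise lose nearly all of their neighbors, should produce such a partition; (ii) is what keeps each recursive subinstance a legitimate routing instance with load $O(1)\cdot\deg$ up to polylog factors. Second, an \emph{embedding lemma}: in a $\polylog$-mixing graph any near-permutation among the parts can be routed with congestion and dilation $\polylog(n)$, which is exactly what emulating $H$ needs. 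Given these, a message from $v$ to $w$ is routed up the tree --- inside each successively larger part it passes through, this is a recursively solved routing subtask that delivers it to a random gateway node --- across one embedded network $H$ at the least common ancestor, and back down to $w$ through the mirror-image subtasks; for the stated load $L\cdot\deg(v)$ one simply partitions the messages into $L$ batches and pipelines, so the per-batch cost controls the total.

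Writing $T(n)$ for the cost of a unit-load routing task on an $n$-node $\polylog$-mixing graph, the scheme gives roughly $T(n)\le 2\,T(n^{1-\epsilon})+\polylog(n)$ (the factor $2$ because the up-phase and the down-phase are sequential, the $k$ parallel subtasks within a phase running on disjoint vertex sets), and since the recursion has depth $O(1)$ this unrolls to $T(n)=\polylog(n)$; load $L\cdot\deg(v)$ then costs $L\cdot\polylog(n)$ rounds, the preprocessing of building the hierarchy and all the embeddings costs $n^{o(1)}=O(n^{0.001})$, and every randomized step succeeds with high probability after a union bound over the $O(1)$ levels. I expect the splitting lemma to be the main obstacle: a balanced cut can strand a low-degree vertex with almost all of its neighbors on the wrong side --- a purely random partition already fails for this reason, which is exactly why the accompanying expander decomposition is set up so that every vertex retains a $1/\polylog(n)$ fraction of its neighbors in its own cluster --- so establishing that each part is simultaneously a $\polylog(n)$-mixing graph and degree-preserving for every vertex is the technical heart. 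A secondary difficulty is keeping the polylog overheads and the $2^{\text{depth}}$ factor in the recursion under control, which is precisely why the branching factor is taken to be polynomial in $n$.
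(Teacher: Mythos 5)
First, a framing remark: the paper does not prove Theorem~\ref{thm-expander-routing} at all --- it is imported as a black box from \cite{Chang+PODC19,Ghaffari+PODC17} (the underlying machinery is the routing of Ghaffari--Kuhn--Su and Ghaffari--Li). So there is no in-paper proof to compare against. Your outline does reproduce the architecture of those cited proofs: a hierarchy of $n^{\epsilon}$-way partitions of constant depth, virtual routing networks between parts embedded via short low-congestion paths, a recursion of the shape $T(n)\le 2T(n^{1-\epsilon})+\polylog(n)$, and batching/pipelining to go from unit load to load $L\cdot\deg(v)$. All of that is sound in outline.

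The genuine gap is the one you flag yourself and then wave at: the splitting lemma. Partitioning a $\polylog(n)$-mixing graph into $n^{\epsilon}$ balanced parts $U_1,\dots,U_k$ so that every $G\{U_i\}$ still mixes in $\polylog(n)$ steps \emph{and} every vertex keeps a $1/\polylog(n)$ fraction of its degree inside its own part is not something that ``short lazy random walks plus some flow rerouting'' is known to deliver when applied to $G$ itself: a vertex of degree $\polylog(n)$ loses essentially all of its neighbors under a random balanced cut, and rescuing many such vertices simultaneously can conflict with balance and with the conductance of the parts. This is exactly why the cited proofs do not partition $G$. Instead, during preprocessing each node $v$ spawns $\Theta(\deg(v))$ tokens that perform lazy random walks of length $\mix(G)$, and the token collisions define a \emph{virtual random multigraph}; the entire hierarchy is built on that virtual graph, where a uniformly random $k$-way split of a random multigraph is again a random multigraph on each part, so mixing and degree preservation hold provably, and the physical walks supply the length-$\polylog(n)$, congestion-$\polylog(n)$ embedding needed to emulate one virtual round in $\polylog(n)$ real rounds. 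Without this substitution (or an actual proof of your splitting lemma) the recursion has no base to stand on. A secondary, fixable slip: $n^{o(1)}$ is not $O(n^{0.001})$ --- you must fix $\epsilon$ as a concrete constant so that the $2^{O(1/\epsilon)}\cdot\polylog(n)$ overheads of the $O(1/\epsilon)$-level construction fit under the stated preprocessing bound.
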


To summarize, combining Theorem~\ref{thm-expander-decomposition} and Theorem~\ref{thm-expander-routing}, in $O(n^{0.001})$ rounds we can partition the node set $V$ into clusters with small mixing time $\polylog (n)$ and so the above routing task can be solved efficiently. Furthermore,
for a node $v \in V_C$ for any cluster $C=(V_C, E_C)$ with $|V_C| > 1$, it holds that $\deg_C(v) \geq \deg(v)/\polylog(n)$. 
 Note that for the trivial case of $|V_C| = 1$, we have $E_C = \emptyset$, and so there is no $K_p$ with edges in $C$. Therefore, in subsequent discussion, we only consider the  clusters with more than one node.

\section{Optimal Sparsity Aware Listing Algorithm}
\label{section:partition}
We present an algorithm which can be executed in a subgraph with good mixing time in order to optimally list all copies of $K_p$, for a given $p \geq 4$, with at least one edge in the subgraph, and potentially other edges outside of it. In order to show this algorithm, we begin by presenting a partitioning lemma, and then proceed to showing a theorem which performs sparsity aware listing.

\subsection{Input Partitioning}
We can modify the argument of~\cite[Lemma~4.2]{Chang+SODA19} in order to get the following statement. 

\begin{lemma}
\label{lemma:partition}
Let $G=(V,E)$ be a graph of maximum degree $\Delta$ where $V=V_1\cup V_2$ for two disjoints sets $V_1,V_2$ and $E= E_1\cup E_2\cup E_{12}$ for three sets $E_1\subseteq V_1\times V_1$, $E_2\subseteq V_2\times V_2$ and $E_{12}\subseteq V_1\times V_2$. Let $m_1$, $m_2$ and $m_{12}$ be upper bounds on the size of $E_1$, $E_2$ and $E_{12}$, respectively. Let $a$, $b$ and $\myn$ be three positive integers, with $a\le b$, satisfying the following conditions:
\begin{itemize}
\item[(a)]
$m_1\ge 20a|V_1|\log \myn$ and $m_1\ge 400 a^2\log^2 \myn$;
\item[(b)]
$m_2\ge 20b|V_2|\log \myn$ and $m_2\ge 400 b^2\log^2 \myn$;
\item[(c)]
$m_{12}\ge 20a|V_1|\log \myn$, $m_{12}\ge 20a|V_2|\log \myn$ and $m_{12}\ge 400 a^2\log^2 \myn$.
\end{itemize} 

 Assume that we create a partition $V_1^1,\ldots,V_1^a$ of $V_1$ as follows: each node $v\in V_1$ chooses uniformly at random a value $j\in\{1,\ldots, a\}$ and joins the set $V_1^j$. Similarly we create a partition $V_2^1,\ldots,V_2^b$ of $V_2$ as follows: each node $v\in V_2$ chooses uniformly at random a value $j\in\{1,\ldots, b\}$ and joins the set $V_2^j$.
Then with probability at least $1-\frac{30b^2\log\myn}{\myn^5}$ the following three statements hold:
\begin{itemize}
\item[(1)]
$|E(V_1^i,V_1^j)|\le 24 (m_1/a^2)$ for all $(i,j)\in[a]\times[a]$;
\item[(2)]
$|E(V_2^i,V_2^j)|\le 24 (m_2/b^2)$ for all $(i,j)\in[b]\times[b]$;
\item[(3)]
$|E(V_1^i,V_2^j)|\le 8 (m_{12}/a^2)$ for all $(i,j)\in[a]\times[b]$.\footnote{It is actually possible to obtain the stronger upper bound $|E(V_1^i,V_2^j)|=O(m_{12}/(ab))$ 
by imposing the stronger conditions $m_{12}\ge 20b|V_1|\log \myn$ and $m_{12}\ge 400 ab\log^2 \myn$. The present statement, which is slightly easier to prove, nevertheless suffices for our purpose.}
\end{itemize}
\end{lemma}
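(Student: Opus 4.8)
The plan is to prove each of the three statements (1)--(3) by a Chernoff bound on a single pair of parts, followed by a union bound over all $O(b^2)$ pairs. For statement (1), fix $(i,j)\in[a]\times[a]$ and consider the random variable $X = |E(V_1^i, V_1^j)|$. An edge $e=\{u,v\}\in E_1$ contributes to $X$ iff $u$ lands in $V_1^i$ and $v$ lands in $V_1^j$ (or vice versa, if $i\neq j$), which happens with probability at most $2/a^2$ (exactly $1/a^2$ when $i=j$). Hence $\E[X]\le 2|E_1|/a^2 \le 2m_1/a^2$. The edges are not independent as events (two edges sharing a vertex are correlated through that vertex's choice), so I would instead expose the vertex choices one at a time and use a Chernoff-type bound for sums with bounded dependence, or — following the approach of \cite[Lemma 4.2]{Chang+SODA19} — partition the edges of $E_1$ into a bounded number of matchings (at most $2\Delta$ of them by Vizing, or just $\Delta$-many classes greedily), apply an independent-Bernoulli Chernoff bound within each matching where the indicator events \emph{are} mutually independent, and sum up. The key point is that condition (a) guarantees $\E[X]$ is large enough (namely $\E[X]\ge \text{something}\cdot \log\myn$ via $m_1\ge 20a|V_1|\log\myn$ controlling the per-matching expectation and $m_1 \ge 400a^2\log^2\myn$ controlling the total) that the multiplicative Chernoff bound gives a failure probability of at most, say, $\myn^{-7}$ or so per pair, with the constant $24$ absorbing both the factor-$2$ in the expectation and the Chernoff slack.

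For statement (2) the argument is identical with $a$ replaced by $b$ and $E_1$ by $E_2$, using condition (b). For statement (3), fix $(i,j)\in[a]\times[b]$ and let $X=|E(V_1^i,V_2^j)|$; an edge of $E_{12}$ lands in this pair with probability exactly $1/(ab)$, so $\E[X]\le m_{12}/(ab)\le m_{12}/a^2$ since $a\le b$. The same matching-decomposition-plus-Chernoff argument, now using condition (c) (which ensures $m_{12}$ is large relative to both $a|V_1|\log\myn$ and $a|V_2|\log\myn$, so that the per-matching expectations are $\Omega(\log\myn)$, and relative to $a^2\log^2\myn$), yields a per-pair failure probability that is again polynomially small in $\myn$, with the weaker constant $8$ reflecting that here we only need to beat the bound $m_{12}/a^2$, which is a factor $b/a\ge 1$ larger than the true expectation $m_{12}/(ab)$. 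The footnote's stronger claim would instead compare against $m_{12}/(ab)$ directly and thus require the stronger hypotheses $m_{12}\ge 20b|V_1|\log\myn$ and $m_{12}\ge 400ab\log^2\myn$ to keep the per-matching expectations large; since we do not need it, I will only prove the stated $8(m_{12}/a^2)$ bound.

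Finally I would take a union bound: there are $a^2\le b^2$ pairs for statement (1), $b^2$ pairs for (2), and $ab\le b^2$ pairs for (3), so at most $3b^2$ bad pairs total, each failing with probability at most $O(\log\myn/\myn^7)$ (the $\log\myn$ factor can arise if the matching decomposition forces a union bound over $O(\myn)$ matching classes inside each pair, each contributing $\myn^{-7}$-ish, but I would rather set the per-class bound small enough that the $\log\myn$ or even $\Delta$ blow-up is harmless). Tracking constants so that the total is at most $\tfrac{30 b^2\log\myn}{\myn^5}$ is the bookkeeping part. \emph{The main obstacle} is handling the dependence among edge-indicator variables correctly and cheaply: a naive application of Chernoff assumes independence, which fails for edges sharing an endpoint, so the real work is in setting up the matching decomposition (or an equivalent martingale / method-of-bounded-differences argument) and verifying that conditions (a)--(c) are exactly what is needed to make the per-matching expectations simultaneously $\Omega(\log\myn)$ (so the Chernoff deviation term is a constant factor) and the number of matchings times the per-matching failure probability still polynomially small. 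This is precisely the point at which \cite[Lemma 4.2]{Chang+SODA19} is adapted, and where all three conditions' two-part form (one bound linear in $|V_i|\log\myn$, one quadratic in $\log^2\myn$) gets used.
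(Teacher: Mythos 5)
Your expectation calculations and the final union-bound bookkeeping match the paper's, and you correctly identify the dependence among edge indicators as the crux. But the concrete concentration tool you propose does not work under hypotheses (a)--(c), and this is exactly where the paper's proof does its real work. The issue with the matching-decomposition-plus-Chernoff plan is the additive slack: with failure probability $\myn^{-\Theta(1)}$ per matching class, Chernoff cannot certify an upper bound below $\Theta(\log\myn)$ for a class whose expected contribution is tiny, so summing over up to $\Delta$ classes yields a bound of the form $O(m_1/a^2) + \Omega(\Delta\log\myn)$. Condition (a) only gives $\Delta\log\myn \le |V_1|\log\myn \le m_1/(20a)$, which exceeds the target $24\,m_1/a^2$ by a factor of order $a$; the classes are also not mutually independent, so you cannot recover by applying Chernoff to their sum. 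A bounded-differences/Azuma argument on the vertex choices fails for the same quantitative reason (it would need $m_1 \gtrsim a^4\Delta\log\myn$, far stronger than what is assumed).

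The paper instead proves statement (3) (and cites \cite[Lemmas 4.2 and 4.3]{Chang+SODA19} for (1) and (2), which are proved the same way) by a $c$-th moment argument with $c=\floor{20\log\myn}$: it applies Markov's inequality to $X^c$ and bounds $\E[X^c]$ by enumerating $c$-tuples of edges of $E_{12}$, classifying each tuple coordinate by whether its two endpoints have already appeared earlier in the tuple, and counting tuples of each type. This is where the two-part form of the conditions enters precisely: $m_{12}\ge 20a|V_1|\log\myn$ and $m_{12}\ge 20a|V_2|\log\myn$ give $c|V_1|\le m_{12}/a$ and $c|V_2|\le m_{12}/a$ (controlling coordinates with one repeated endpoint), and $m_{12}\ge 400a^2\log^2\myn$ gives $a^2c^2/m_{12}\le 1$ (controlling coordinates with two repeated endpoints), yielding $\E[X^c]\le (c+1)^3 m_{12}^c/a^{2c}$ and hence a per-pair failure probability below $(c+1)^3/8^c < 10\log\myn/\myn^5$. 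Your proposal defers exactly this step to "the bookkeeping part" with tools that cannot deliver it, so as written the proof is incomplete at its central point.
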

\begin{proof}
For any $(i,j)\in[a]\times[a]$, Statement~(1) holds with probability at least $1-(10\log \myn)/\myn^5$ from~\cite[Lemmas~4.2 and 4.3]{Chang+SODA19}. 
Similarly, for any $(i,j)\in[b]\times[b]$, Statement (2) holds with probability at least $1-(10\log \myn)/\myn^5$ as well. Let us show below that for any $(i,j)\in[a]\times[b]$, Statement~(3) holds with probability at least $1-(10\log \myn)/\myn^5$. Using the union bound then concludes the proof.

Let us fix $(i,j)\in[a]\times[b]$. For each edge $e\in E_{12}$ let $x_e$ denote the random variable that has value 1 if $e\in E(V_1^i,V_2^j)$ and value $0$ otherwise. Let us write $X=|E(V_1^i,V_2^j)|$. Observe that $X=\sum_{e\in E_{12}} x_e$ and $\E[X]=\frac{|E_{12}|}{ab}$. By Markov's inequality we have
\[
\Pr[X\ge 8 (m_{12}/a^2)]=\Pr[X^c\ge (8 (m_{12}/a^2))^c]\le \frac{\E[X^c]}{(8 (m_{12}/a^2))^c}=\frac{a^{2c}\E[X^c]}{8^cm_{12}^c}
\]
for any value $c> 0$. We choose the value $c=\floor{20\log\myn}$.

Now let us write 
\[
\E[X^c]=
\E\left[\Big(\sum_{e\in E_{12}} x_e\Big)^c\right]=
\sum_{k=0}^c\sum_{\ell=0}^c f_{k\ell}\frac{1}{a^{k}b^{\ell}}\le 
\sum_{k=0}^c\sum_{\ell=0}^c f_{k\ell}\frac{1}{a^{k+\ell}}
,
\]
where $f_{k\ell}$ denotes the number of $c$-tuples of edges from $E_{12}$ that have $k$ distinct points in $V_1$ and~$\ell$ distinct points in $V_2$.

We associate to each $c$-tuple of edges $(e_1,\ldots,e_c)\in E_{12}^c$ a vector $\vec{v}\in \{(0,0),(0,1),(1,0),(1,1)\}^c$ as follows: for each $s\in\{1,\ldots,c\}$, we set 
\[
\vec{v}_s=
\left\{
\begin{array}{ll}
(0,0) &\textrm{ if }e_s\in (V^i_1\setminus U_s)\times (V^j_2\setminus W_s),\\
(0,1) &\textrm{ if }e_s\in (V^i_1\setminus U_s)\times W_s,\\
(1,0) &\textrm{ if }e_s\in  U_s\times (V^j_2\setminus W_s),\\
(1,1) &\textrm{ if }e_s\in U_s\times W_s,
\end{array}
\right.
\]
where $U_s\subseteq V^i_1$ denotes the set of nodes from $V^i_1$ that are the endpoint of at least one edge in $\{e_1,\ldots,e_{s-1}\}$, and $W_s\subseteq V^j_2$ denotes the set of nodes from $V^j_2$ that are the endpoint of at least one edge in $\{e_1,\ldots,e_{s-1}\}$. 
We say that the vector $\vec{v}$ is of type $(d_{00},d_{01},d_{10},d_{11})$ if it contains $d_{00}$ times the coordinate $(0,0)$, $d_{01}$ times the coordinate $(0,1)$, $d_{10}$ times the coordinate $(1,0)$, and $d_{11}$ times the coordinate $(1,1)$, for positive integers $d_{00}$, $d_{01}$, $d_{10}$ and $d_{11}$ such that $d_{00}+d_{01}+d_{10}+d_{11}=c$. 

A given vector $\vec{v}$ is associated to more that one $c$-tuple in $E_{12}^c$.
A crucial observation is that there are at most 
\[
m_{12}^{d_{00}} (c|V_1|)^{d_{01}}(c|V_2|)^{d_{10}}c^{2d_{11}}
\]
$c$-tuples $(e_1,\ldots,e_c)\in E_{12}^c$ that are associated to a given vector $\vec{v}$ of type $(d_{00},d_{01},d_{10},d_{11})$. Indeed, when enumerating all the $c$-tuples that are associated to this $\vec{v}$, for each coordinate $(0,0)$ we can choose any edge that has not already been chosen, for each coordinate $(0,1)$ we have at most $|V_1|$ choices for the first endpoint and $c$ choices for the second endpoint, for each coordinate $(1,0)$ we have at most $c$ choices for the first endpoint and $|V_2|$ choices for the second endpoint, and for each coordinate $(1,1)$ we have at most $c$ choices for the first endpoint and $c$ choices for the second endpoint.

We now use this characterization to give an upper bound on $f_{k\ell}$. Let us write $x_{\min}=\max\{0,c-k-\ell\}$ and $x_{\max}=\min\{c-k,c-\ell\}$. Observe that the $c$-tuples of edges that contribute to $f_{k\ell}$ are those such the corresponding vector is of type $(k+\ell+x-c, c-\ell-x, c-k-x, x)$ for some value $x\in[x_{\min},x_{\max}]$. The argument of the previous paragraph enables us to give the following upper bound:
\begin{eqnarray*}
f_{k\ell}&\le&
\sum_{x=x_{\min}}^{x_{\max}}
c^{2x}
(c|V_1|)^{c-\ell-x}
(c|V_2|)^{c-k-x}
m_{12}^{k+\ell+x-c}\\
&\le& 
  a^{k+\ell-2c}m_{12}^c
\sum_{x=x_{\min}}^{x_{\max}}
\left(\frac{a^2c^{2}}{m_{12}}\right)^x,
\end{eqnarray*}
where we used the inequalities
$m_{12}\ge 20a|V_1|\log \myn$ and $m_{12}\ge 20a|V_2|\log \myn$ from Condition~(c), which implies $c|V_1|\le m_{12}/a$ and $c|V_2|\le m_{12}/a$ from our choice of $c$, to obtain the upper bound. Now, observing that $\frac{a^2c^{2}}{m_{12}}\le 1$ from Condition~(c), we get
\begin{eqnarray*}
f_{k\ell}&\le& 
 a^{k+\ell-2c}m_{12}^c(x_{\max}-x_{\min}+1)
\\
&\le& 
(c+1)
a^{k+\ell-2c}m_{12}^{c}.
\end{eqnarray*}

We thus obtain 
\[
\E[X^c]\le
\sum_{k=0}^c\sum_{\ell=0}^c 
(c+1)
\frac{m_{12}^c}{a^{2c}}
\le
(c+1)^3
\frac{m_{12}^c}{a^{2c}}.
\]

We conclude that 
\[
\Pr[X\ge 8 (m_{12}/a^2)]\le \frac{a^{2c}\E[X^c]}{8^cm_{12}^c}\le \frac{(c+1)^3}{8^c}
<\frac{10\log \myn}{\myn^{5}},
\]
as claimed.
\end{proof}

\subsection{Sparsity Aware Listing}
\label{sec1}
We show the following sparsity-aware $K_p$ listing algorithm for clusters in $G$ with low mixing time, see Figure~\ref{fig} for an illustration regarding $K_4$. Due to~\cite[Remark 2.6]{Censor+PODC20}\footnote{The remark states that since the lower bound for $K_{p = \omega(\log n)}$ listing is $\tilde{\Omega}(n)$, the case of $p = \omega(\log n)$ can be solved optimally (up to polylogarithmic factors) by a trivial algorithm.} throughout the paper we assume that $p = O(\log n)$, and so we can hide factors of $p$ under the $\tilde O(\cdot)$ notation.

\begin{theorem}\label{th:spars}
Let $C=(V_C,E_C)$ be any subgraph of $G = (V, E)$ with mixing time $\polylog(n)$. Let $\bar E\subseteq E(V_C,V \setminus V_C)$ be any set of edges so that any node $u\in V_C$ is incident to $\tilde O(\deg_{C}(u))$ edges from $\bar{E}$. 

Let $E'\subseteq E(V\setminus V_C, V\setminus V_C)$ be a set of edges given as input to~$C$ in the following way: the edges are distributed among the nodes of $C$ so that each node $u\in V_C$ receives $\tilde O(n^{1-2/p}\cdot \deg_{C}(u))$ edges from~$E'$.
If the conditions
\begin{equation}\label{ineq}
\frac{|E_C|}{|V_C|} = \Omega\left(\frac{|E'|}{n}\right),\  \frac{|E_C|}{|V_C|} = \Omega(n^{1/2}) 
\end{equation}
hold, then the graph $C$ can list in $\tilde O(n^{1-2/p})$ rounds, with high probability, by using only the edges $E_C$ for communication, all the $p$-cliques $\{v_1,\dots,v_p\}$ such that there exists some $p' \geq 2$, where
\begin{itemize}
\item
$V_{p'}=\{v_1,v_2, \dots, v_{p'}\} \subseteq V_C$, and $V_{p\setminus p'} = \{v_{p'+1}, \dots v_p\} \subseteq V \setminus V_C$,
\item 
$V_{p'} \times V_{p'} \subseteq E_C$, $V_{p'} \times V_{p\setminus p'} \subseteq \bar E$, and $V_{p\setminus p'} \times V_{p\setminus p'} \subseteq E'$.
\end{itemize}
\end{theorem}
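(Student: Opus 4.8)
The plan is to follow the template of the Chang et al.\ $K_p$-listing algorithm, but with the two different partition granularities suggested in the technical overview. First I would set $a$ to be the number of parts for $V_C$ and $b$ the number of parts for $V\setminus V_C$, choosing them so that the number of relevant $p$-tuples of parts is roughly $n$ (so each node is responsible for $\tilde O(1)$ many $p$-tuples after a balanced assignment) while the per-tuple edge budgets work out. Concretely, a $p$-clique of the type in the statement uses $p'\ge 2$ parts from the $V_C$-side partition and $p-p'$ parts from the $V\setminus V_C$-side partition, and I want the total over all $p'$ to be $O(n)$; given the bound $|E_C|/|V_C| \ge n^{1/2}$ one wants $a$ around $|V_C|/(|E_C|/|V_C|)\cdot(\text{something})$ — I would pin down $a$ and $b$ precisely so that (i) $ab^{\,p-2}\le \tilde O(n)$ after accounting for all choices of $p'$, and (ii) Conditions (a)--(c) of Lemma~\ref{lemma:partition} are satisfied with $m_1 = \tilde O(|E_C|)$, $m_2 = \tilde O(|E'|)$, $m_{12} = \tilde O(|\bar E|)$; the hypotheses $|E_C|/|V_C| = \Omega(n^{1/2})$ and $|E_C|/|V_C| = \Omega(|E'|/n)$ are exactly what make $m_1,m_{12}$ large enough relative to $a|V_C|$ and $a^2$, and $m_2$ large enough relative to $b|V\setminus V_C|$.

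Next I would apply Lemma~\ref{lemma:partition} (to the graph on $V$ with edge classes $E_C$, $E'$, $\bar E$) to get, with high probability, the three uniform edge-count bounds: every within-$V_C$ part pair carries $O(m_1/a^2)$ edges of $E_C$, every within-complement part pair carries $O(m_2/b^2)$ edges of $E'$, and every mixed pair carries $O(m_{12}/a^2)$ edges of $\bar E$. Then I assign each $p$-tuple of parts $(V_C^{i_1},\dots,V_C^{i_{p'}},\,V_2^{j_1},\dots,V_2^{j_{p-p'}})$ to some node of $V_C$ so that each node gets $\tilde O(\deg_C(u))$ tuples; this is possible because the total number of tuples is $\tilde O(n) \le \tilde O(|E_C|/\text{avg-deg}\cdot \text{avg-deg})$-compatible, i.e.\ $\tilde O(n) = \tilde O(\sum_{u\in V_C}\deg_C(u)\cdot n^{1-2/p}/|E_C|)$-ish once the granularities are chosen right — the key point being that $n/|E_C|$ times $\deg_C(u)$ summed gives $n$, and the extra $n^{1-2/p}$ is the round budget. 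For a node $u$ responsible for such a tuple to list the corresponding cliques it must learn all $E_C$-edges inside each relevant $V_C$-pair, all $\bar E$-edges in each relevant mixed pair, and all $E'$-edges in each relevant complement-pair; by the uniform bounds this is $\tilde O(n^{1-2/p}\cdot \deg_C(u))$ messages, so the gather is a routing task of load $L = \tilde O(n^{1-2/p})$ per node (measured against $\deg_C(u)$), deliverable in $\tilde O(n^{1-2/p})$ rounds by Theorem~\ref{thm-expander-routing}. The $E'$-edges are already pre-distributed with $\tilde O(n^{1-2/p}\deg_C(u))$ per node, and the $\bar E$-edges incident to $u$ number $\tilde O(\deg_C(u))$, so the starting distribution is also compatible with expander routing; I must check each of $E_C$, $\bar E$, $E'$ both originates and is destined with load $\tilde O(n^{1-2/p})$.

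The main obstacle — and where the real work lies — is step two: verifying that the partition parameters $a,b$ can be chosen simultaneously satisfying Conditions (a)--(c) of the partition lemma \emph{and} keeping the count of relevant $p$-tuples at $\tilde O(n)$ \emph{and} keeping each node's gather load at $\tilde O(n^{1-2/p}\deg_C(u))$, using only the two numeric hypotheses in \eqref{ineq}. This is a somewhat delicate balancing of exponents, complicated by the fact that $p'$ ranges over $2,\dots,p$ so several tuple-types must be handled at once (I expect to bound the number of type-$p'$ tuples by a geometric-type sum dominated by one endpoint, analogous to how \cite{Chang+SODA19} handles a single partition), and by the asymmetry that the mixed-pair bound from Lemma~\ref{lemma:partition}(3) is only $O(m_{12}/a^2)$ rather than $O(m_{12}/(ab))$ — I should check that this weaker bound still suffices, which the footnote in the lemma promises but which I would reconfirm in the load accounting. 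A secondary but routine obstacle is the correctness argument that every clique of the stated form is in fact listed: its $p'$ $V_C$-vertices land in some $p'$-subset of $V_C$-parts and its $p-p'$ complement vertices in some $(p-p')$-subset of $V_2$-parts, so the responsible node for that tuple sees all $\binom{p}{2}$ required edges and outputs the clique. Once the exponent bookkeeping is settled, the rest is assembling Lemma~\ref{lemma:partition} and Theorem~\ref{thm-expander-routing} with a union bound over the $\tilde O(1)$ bad events.
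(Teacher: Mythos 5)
Your high-level architecture matches the paper's: a two-granularity random partition of $V_C$ and $V\setminus V_C$, Lemma~\ref{lemma:partition} for uniform per-pair edge bounds, a degree-weighted assignment of part-tuples to nodes, and Theorem~\ref{thm-expander-routing} for the gather. But the step you explicitly defer --- ``pinning down $a$ and $b$'' and checking the load arithmetic --- is not a routine verification; it is the entire content of the theorem, and the tentative normalization you float does not work. You propose making the number of relevant $p$-tuples of parts roughly $n$ (and you are internally inconsistent about whether each node then owns $\tilde O(1)$ or $\tilde O(\deg_C(u))$ tuples; also your budget constraint should read $a^2b^{p-2}$, not $ab^{p-2}$, since $p'\ge 2$ and $a\le b$). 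The paper instead sets $b=(|V_C|\cdot|E'|/|E_C|)^{1/p}$ and $a=\sqrt{|V_C|}\,b^{1-p/2}$, so that $a^2b^{p-2}=|V_C|$ tuples are assigned with node $v$ owning $\Theta(\deg_C(v)/\delta)$ of them, where $\delta$ is the average degree of $C$; the per-pair bound then needs to be $\tilde O(n^{1-2/p}\delta)$, and this is exactly where the two hypotheses in \eqref{ineq} enter ($|E_C|/|V_C|=\Omega(|E'|/n)$ gives $|E'|/b^2=\tilde O(\delta n^{1-2/p})$, and $|E_C|/|V_C|=\Omega(n^{1/2})$ together with $p\ge 4$ absorbs the $n/a$ term coming from conditions (a)--(c)). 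If you instead force $\approx n$ tuples with per-node load $n^{1-2/p}\deg_C(u)$, the within-complement pairs would need to carry only $O(|E_C|\,n^{-2/p})$ edges each, which fails whenever $|E'|\gg|E_C|$ --- precisely the regime the theorem is designed for. So the normalization tied to $|V_C|$ and to $\deg_C(v)/\delta$ is not a detail you can leave open; without it the claimed round complexity does not follow.

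Two further ingredients are missing from your sketch. First, the send-side load for $E'$ does not work from the initial distribution: a node holding $\tilde O(n^{1-2/p}\deg_C(u))$ edges of $E'$, each of which must reach $\tilde O(n^{1-2/p}|E_C|/|E'|)$ responsible nodes, can exceed the budget unless $|E'|\ge n^{1-2/p}|E_C|$. The paper inserts a reshuffling step so that each node $v$ holds $O(|E'|\cdot k_v/|V_C|)$ edges of $E'$ before broadcasting, which is what makes the send load $\tilde O(n^{1-2/p}\deg_C(v))$. Second, you need the boundary cases ($|E'|<|E_C|$, handled by dummy edges, and $|V_C|<n^{1-2/p}$, handled by exhaustive search) and a mechanism for all of $C$ to agree on the partition of $V\setminus V_C$ using only $E_C$ for communication (the paper disseminates a random permutation chosen by one node); neither is addressed in your proposal.
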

\begin{figure}[t]
	\begin{center}
		\includegraphics[trim = 4cm 5.8cm 0cm 6.5cm, clip, scale=1]{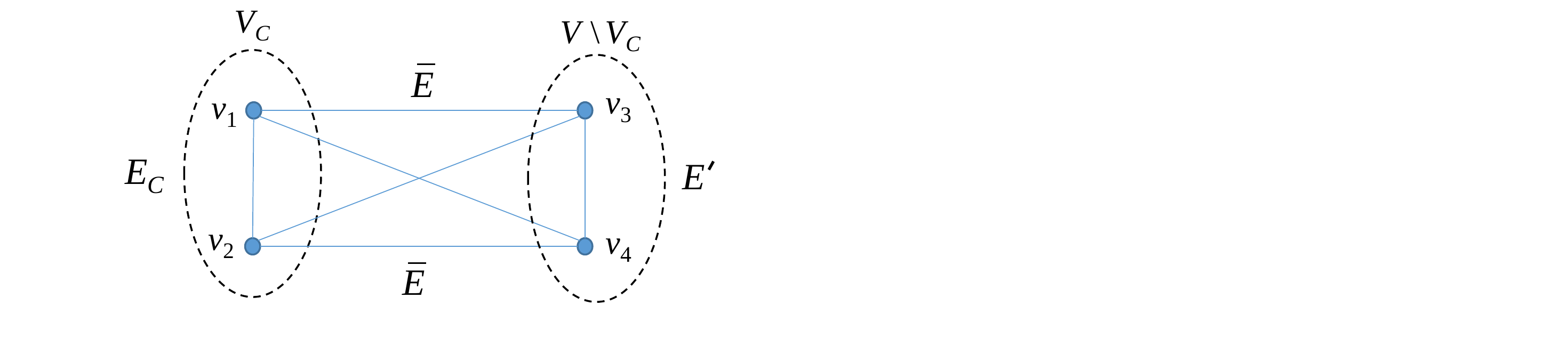}\vspace{-2mm}
	\end{center}
	\caption{\label{fig:partitionTree} Illustration for Theorem \ref{th:spars} with $p=4$ ($K_4$-listing) and $p'=2$.}\label{fig}
\end{figure}

\paragraph*{Remark.}
Theorem~\ref{th:spars} still works with the same proof if the lower bound requirements on the average degree $|E_{C}|/|V_{C}|$  of $C$ only hold when we restrict ourselves to the nodes in $V_C$ that have incident edges in $\bar E$. Intuitively, this is because that having a lot of \emph{irrelevant} low-degree nodes in $C$ does not reduce the capability for $C$ to list subgraphs, as they can be simply ignored for the case $2 \leq p' \leq p - 1$. For the special case of $p' = p$, we can simply apply the  subgraph listing algorithm  of~\cite[Theorem 5]{Chang+SODA19} with the routing algorithm of Theorem~\ref{thm-expander-routing} to $C$. In $\tilde{O}(n^{1 - 2/p})$ rounds all instances of $K_p$ in $C$ can be listed.

\begin{proof}
In what follows, we assume that  $|E'| \geq |E_C|$ and $k = |V_C| \geq n^{1-2/p}$. In the case that the former does not hold, we can always add $|E_C|$ \emph{dummy} edges to $E'$ without breaking any of the other conditions of the statement. In the case that the latter  does not hold, observe that each node $v \in V_C$ has at most $\tilde O(|V_C|) = \tilde O(k) = \tilde O(n^{1-2/p})$ neighbors $N(v)$ in the subgraph induced by $E_C \cup \bar E$.
Therefore, $v$ can list all required copies of $K_p$ involving $v$ in $\tilde O(n^{1-2/p})$ rounds by learning all edges in $N(v) \times N(v)$ as follows: $v$ sends the list $N(v)$ to all its neighbors $N(v)$, and then each $u \in N(v)$ sends $N(u) \cap N(v)$ to $v$.

We now describe the algorithm. We note that whenever we communicate between the nodes in $C$ in this algorithm, we utilize Theorem~\ref{thm-expander-routing}.

\paragraph*{Initialization.} We fix the value of $p' \in [p] \setminus \{1\}$, and perform the following algorithm which lists all copies of $K_p$ with exactly $p'$ nodes in $V_C$, $p-p'$ nodes in $V \setminus V_C$, and edges from the respective edge sets, $E_C, E'$, and $\bar E$. Thus, in order to list all copies of $K_p$ which satisfy the conditions of the theorem, we repeat this sequentially for all possible values of $p'$. By showing that for a specific $p'$ we can solve the problem in $\tilde O(n^{1-2/p})$ rounds, we achieve the required statement of the theorem. 

Denote by $k = |V_C|$, $m = |E_C|$, $n = |V|$, $m' = |E'|$. Denote the average degree in $C$ rounded down to the nearest power of 2 by $\delta = 2^{\floor{\log(2m/k)}}$. By the assumption that every node $u\in V_C$ is incident to at most $\tilde O(\deg_{C}(u))$ edges from $\bar{E}$, we get that $|\bar E| = \tilde{O}(|E_C|) = \tilde{O}(m)$. Clearly, since the mixing time of $C$ is $\polylog(n)$, the diameter of $C$ is $\polylog(n)$ and thus it is possible to trivially compute the values $k, m, m'$, and $\delta$, and to ensure that every node in $V_C$ knows them. Further, from the definition of the \congest model, we assume that all nodes know the value of $n$, and so all nodes in $V_C$ know all the values defined here.

We borrow the definition of \emph{vertex classes} from~\cite{Chang+SODA19}. Every node $v$ in $C$ computes $k_v = \deg_C(v)/\delta$. A node $v$ is in \emph{vertex class 0} if $k_v < 1/2$, and otherwise is in vertex class $i$ if $k_v \in [2^{i-2}, 2^{i-1})$. As shown in~\cite[Lemma 4.1]{Chang+SODA19}, it is possible in $\tilde O(1)$ rounds to reassign the IDs of all the nodes in $C$ such that the set of IDs is $[|C|]$ and the value $k_v$ can be computed from  $\ID(v)$.

Throughout the algorithm, we desire to \emph{only} use nodes in $C$ which have degree in $C$ at least half of the average, that is, nodes of class 1 and above, and so we denote these nodes by $C' \subseteq C$. Notice that $\sum_{v \in C'} \deg_C(v) \geq (1/2) \cdot \sum_{v \in C} \deg_C(v)$, and that $\sum_{v \in C'} 2k_v \geq k$. We now employ the nodes $C'$ to take responsibility for all the other nodes $C \setminus C'$. That is, each node $v \in C'$, of class $i$, is assigned some $2^{i+1}$ nodes $H_v \subseteq C \setminus C'$, such that each node in $C \setminus C'$ is assigned to exactly one node of $C'$. Notice that due to the definition of vertex classes, there exists a way to allocate the nodes of $C\setminus C'$ to the nodes of $C'$ while obeying these demands. Further, since every node knows the class of every other node, the nodes can locally compute these allocations. 

Now, node $v \in C'$ learns all the edges of $E_C, \bar E$ incident to any node in $H_v$, and also all the edges in $E'$ held by any node in $H_v$. Notice that, due to the constraints of this theorem, each node $u \in C$ holds at most $\tilde O(n^{1-2/p}\cdot \deg_C(u))$ such edges, and thus each node wishes to send and receive at most $\tilde O(n^{1-2/p}\cdot \deg_C(u))$ messages when every $v \in C'$ attempts to learn all the data held in the nodes $H_v$. As such, this step can be completed in $\tilde O(n^{1-2/p})$ rounds. Further, from now on, whenever we write that a node $v \in C \setminus C'$ attempts to send or receive a message, 
its corresponding node in $C'$ will be the one that actually sends or receives the message.
Notice that it is possible to do this while only increasing the round complexity of any algorithm by at most a poly-logarithmic factor, as each node $v \in C'$ now has at most $\tilde O(\deg_C(v))$ extra messages to send or receive per round. 

\paragraph*{Partitioning the graph.} We create two partitions $\mathbb{V}$ and $\mathbb{K}$, where $\mathbb{V}$ is a partition of $V \setminus V_C$ into $b = (k\cdot m'/m)^{1/p}$ roughly equally-sized parts, $\mathbb{V} = \{V_1, \dots, V_b\}$, and $\mathbb{K}$ is a partition of $V_C$ into $a = \sqrt{k}b^{1-p/2}$ roughly equally-sized parts, $\mathbb{K} = \{K_1, \dots, K_a\}$.\footnote{We assume that $a, b$ are integers, and otherwise round. Notice that as long as $1 < a, 1 < b$, this rounding can only incur an addition of a constant factor to the final round complexity. Nonetheless, we do need to show that $1 < a, 1 < b$. Notice that $1 < a$ if and only if $k > n^{1-2/p}$, which holds, as stated at the start of the proof, and that $1 < b$ holds due to the constraint that $|E'|>|E_C|$ and that $k > 1$.} To do so, every node $u \in V_C$ chooses uniformly at random which part in $\mathbb{K}$ to join, and sends the index of this part across all its incident edges in $E_C$, i.e., to all its neighbors in $C$. In order to create $\mathbb{V}$, we perform the following procedure. Let $v \in C'$ be some node chosen in an arbitrary, hardcoded way from $C'$. Notice that due to the conditions of this theorem, it must be that $\forall u \in C', \deg_C(u) = \Omega(n^{1/2})$. Node $v$ chooses uniformly at random a permutation $\pi$ on the nodes $V \setminus V_C$. Then, within $\tilde{O}(n^{1/2})$ rounds, node $v$ tells another, arbitrary node $u \in C'$ the value of $\pi$. We then repeat at most $\tilde O(1)$ iterations, where, in each iteration, each of the nodes in $C'$ which knows the value of $\pi$ tells this to some other node which does not know it, in a total of $\tilde{O}(n^{1/2})$ rounds per iteration. Now, all the nodes of $C'$ can locally compute $\mathbb{V}$ using $\pi$. Notice that since, as stated above, from here on the nodes of $C'$ simulate all the messages of nodes of $C \setminus C'$, we can implicitly assume that all of $C$ know $\mathbb{V}$, regardless of whether a node is in $C'$ or not.

We now utilize Lemma~\ref{lemma:partition} to claim that between any pair of parts in $\{V_1, \dots, V_b, K_1, \dots, K_a\}$, there are $\tilde O(n^{1-2/p} \cdot \delta)$ edges. We invoke Lemma~\ref{lemma:partition} with 
\begin{align*}
    m_1 &= \max \{m,\ 20ak \cdot \log n,\ 400a^2 \cdot \log^2 n\}, \\
    m_2 &= \max \{m', 20bn \cdot \log n,\ 400b^2 \cdot \log^2 n\}, \\
    m_{12} &= \max \{|\bar E|,\ 20ak \cdot \log n,\ 20an \cdot \log n,\ 400a^2 \cdot \log^2 n\}, \\
    \bar n &= n.
\end{align*}
We can see that all the conditions of the lemma hold, and thus we get that, with high probability, for every $i, j \in [a], i', j' \in [b]$, $|E(K_i, K_j)| \leq 24 \cdot m_1/a^2, |E(K_i, V_{j'})| \leq 8 \cdot m_{12}/a^2, |E(V_{i'}, V_{j'})| \leq 24 \cdot m_2 / b^2$. Thus, the number of edges between any pair of parts in $\{V_1, \dots, V_b, K_1, \dots, K_a\}$ is at most $O(m_1/a^2 + m_{12}/a^2 + m_2/b^2)$. Observe that the following hold
\begin{align*}
    O(m_1/a^2) &= \tilde O(m/a^2 + k/a + 1)\\
    O(m_2/b^2) &= \tilde O(m'/b^2 + n/b + 1)\\
    O(m_{12}/a^2) &= \tilde O(m/a^2 + k/a + n/a + 1) = \tilde O(m/a^2 + n/a + 1)\\
    O(m_1/a^2 + m_{12}/a^2 + m_2/b^2) &= \tilde O(m/a^2 + m'/b^2 + n/a + n/b + 1).
\end{align*} 

Primarily, notice that $a \leq b$ since $\sqrt{k}b^{1-p/2} \leq b \Longleftrightarrow \sqrt{k} \leq b^{p/2} = \sqrt{km'/m} \Longleftrightarrow m'\geq m$. Also, observe that $m/a^2 = m/(k \cdot b^{2-p}) = b^p \cdot m / (k \cdot b^2) = (m'\cdot k / m) \cdot m / (k \cdot b^2) = m'/b^2$. Thus, in combination with the above, the following is true
\begin{gather*}
    O(m_1/a^2 + m_{12}/a^2 + m_2/b^2) = \tilde O(m'/b^2 + n/a + 1).
\end{gather*}

Thus, we desire to show that $\tilde O(m'/b^2 + n/a + 1) = \tilde O(n^{1-2/p} \cdot \delta)$. Notice that due to the constraint that $\delta = \Theta(m/k) = \Omega(m'/n)$, we can see that the following holds
\begin{align*}
    \tilde O(m'/b^2) 
    &= \tilde O(m' (m' \cdot k /m)^{-2/p})\\
    &= \tilde O((m')^{1-2/p} \cdot (m/k)^{2/p}) \\
    &= \tilde O((m'/n)^{1-2/p} \cdot n^{1-2/p} \cdot (m/k)^{2/p}) \\
    &= \tilde O((m/k) \cdot n^{1-2/p}) \\
    &= \tilde O(\delta \cdot n^{1-2/p}).
\end{align*}

Due to the fact that $p \geq 4$ and $\delta = \Omega(n^{1/2})$, we obtain that 
\begin{gather*}
    \tilde O(n/a) = \tilde O(n) = \tilde O(\delta \cdot n^{1-2/p}).
\end{gather*}

Thus, we know that, with high probability, the number of edges between any two parts in $\{V_1, \dots, V_b, K_1, \dots, K_a\}$ is $\tilde O(n^{1-2/p}\cdot \delta)$.

\paragraph*{Reshuffling the input.}
Before we perform the final listing step, we need to reshuffle the edges of $E'$ across the nodes of $C$. In the following part (\emph{Performing the Listing}), every edge of $E'$ is broadcast to many nodes in the graph, and, due to symmetry, every edge is broadcast to the same number of nodes. We thus need to reshuffle these edges across $C$ such that nodes with higher degree hold more edges. Specifically, we desire for each node $v \in C'$ to hold $O(m' \cdot k_v/k)$ edges, and since, as seen before, $\sum_{v \in C'}2k_v \geq k$, this covers all the edges in $E'$. Using the fact that the class of a node $v \in C'$, the value $\floor{\log k_v}-2$, can be computed locally by all the nodes, implying that a $2$-approximation to the degree of every node is known to every other node, the nodes can locally know how many edges each node needs to \emph{receive}. In a similar manner, the nodes construct \emph{additional} vertex classes in order to know $2$-approximations to the number of edges in $E'$ which are held by a node. This information allows the nodes to compute how many edges each needs to \emph{send}. Combined, these parts suffice for every node to know which node it needs to send edges to. 
Since each node $v \in C'$ originally holds at most $\tilde O(n^{1-2/p}\cdot \deg_C(v))$ edges of $E'$, it is possible to perform the reshuffling within $\tilde O(n^{1-2/p})$ rounds.

\paragraph*{Performing the listing.}
We are now arriving at the final stage of the algorithm, where nodes in $C$ are assigned various parts in $\{V_1, \dots, V_b, K_1, \dots, K_a\}$ and are required to learn all the copies of $K_p$ between those parts. Primarily, notice that since we know that $a \leq b$, the total number of ways to choose $p'$ parts from $\{K_1, \dots, K_a\}$ and $p-p'$ parts from $\{V_1, \dots, V_b\}$ is bounded by $a^{p'}\cdot b^{p-p'} \leq a^2\cdot b^{p-2}=k$. Further, as done in~\cite{Chang+SODA19}, since $\sum_{v \in C'}2k_v \geq k$, it is possible to assign, in a hardcoded, globally known manner the $k$ choices of $p$ parts to the nodes $C'$ such that each $v\in C'$ receives between $2k_v$ and $4k_v$ sets of $p$ parts, as all the nodes can compute a $2$-approximation of $k_v$ for any $v \in C'$. Finally, each node $v \in C'$ desires to learn the edges between all the parts which it is assigned, and thus receive at most $\tilde O(n^{1-2/p} \cdot \delta \cdot k_v) = \tilde O(n^{1-2/p} \cdot \deg_C(v))$ messages. Likewise, we desire that every node send at most such many messages. To achieve that, notice primarily that all the nodes in $C$ know, given an edge in the graph, which nodes in $C'$ need to receive it as part of the listing. Further, notice that within every edge set $E_C, E', \bar E$, every edge is required to be sent to exactly the same number of nodes in $C'$. Thus, since the total amount of information which needs to be sent over the entire graph $C$ is $\tilde O(m\cdot n^{1-2/p})$ edges (since this bounds the number of messages received), each edge in $E_C$, $\bar E$, and $E'$ is sent to at most $\tilde O(n^{1-2/p})$, $\tilde O(n^{1-2/p})$, and $\tilde O(n^{1-2/p}\cdot m/ m')$ nodes, respectively. Due to the fact that each node $v \in C$ originally has $\tilde O(\deg_C(v)))$ edges incident to it in $E_C$ and $\bar E$, this implies that sending the first two types of edges incurs $\tilde O(n^{1-2/p}\cdot \deg_C(v))$ messages from $v$. Further, due to the reshuffling step, all of $E'$ is stored in the nodes $C'$, where $v \in C'$ stores $O(m' \cdot k_v/k)$ edges, and thus sending each to $\tilde O(n^{1-2/p}\cdot m/ m')$ nodes, incurs a total of $\tilde O((m' \cdot k_v/k) \cdot n^{1-2/p}\cdot m/ m') = \tilde O(n^{1-2/p} \cdot k_v \cdot (m/k)) = \tilde O(n^{1-2/p} \cdot \deg_C(v))$ messages sent from~$v$.

As the total number of messages sent and received by node $v \in C'$ is at most $\tilde O(n^{1-2/p}\cdot \deg_C(v))$, we conclude that the round complexity of this final stage of the algorithm is $\tilde O(n^{1-2/p})$ as well. 
\end{proof}

\section{Optimal $K_p$-listing Algorithm for $p>4$}
\label{section:Kp}

We show here how to list all instances of $K_p$, for $p>4$.

We start with a simple procedure, in which each node of small degree lists all cliques that it is a part of by an \emph{exhaustive search} approach. Formally, each node $v$ with $\deg(v)\leq 2n^{1/2}$ sends its neighborhood $N(v)$ to all of its neighbors in $N(v)$. Each neighbor sends an ack about each neighbor, and so $v$ learns about all edges in $N(v) \times N(v)$, thereby $v$ lists all $K_p$ instances involving it. Thus we remove from the graph the node $v$ and all edges touching $v$. This clearly takes at most $O(n^{1/2})$ rounds, according to the degree condition for $v$. 

This procedure does not give us any promise on the degrees given by the remaining edges because by removing edges we could now have additional low-degree nodes. However, the property that this guarantees is that this handles many nodes in case the average degree in the graph was initially not too large. Denote by $\avgdeg$ the average degree of the graph.  

\begin{claim}
\label{claim:lowDeg}
Listing all instances of $K_p$ involving $v$ such that $\deg(v) \leq 2n^{1/2}$ can be done within $O(n^{1/2})$ rounds by an exhaustive search procedure. This can be done in parallel for all such nodes~$v$.
Further, if $\avgdeg < n^{1/2}$, then this removes at least half of the nodes from the graph.
\end{claim}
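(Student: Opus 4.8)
The plan is to prove Claim~\ref{claim:lowDeg} in two parts, mirroring its two assertions. The first part is the complexity and correctness of the exhaustive search procedure described just above the claim; the second part is the counting argument showing that when $\avgdeg < n^{1/2}$ the procedure removes at least half the nodes.

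For the first part, I would argue as follows. Fix a node $v$ with $\deg(v) \leq 2n^{1/2}$. Node $v$ broadcasts the list $N(v)$ to each of its neighbors; since $|N(v)| \leq 2n^{1/2}$ and each message carries $O(\log n)$ bits, this takes $O(n^{1/2})$ rounds along each incident edge, and all incident edges are used in parallel. Each neighbor $u \in N(v)$ receives $N(v)$ and replies with the set $N(u) \cap N(v)$ (equivalently, an acknowledgement for each $w \in N(v)$ that is also a neighbor of $u$); this is again at most $2n^{1/2}$ items, so $O(n^{1/2})$ rounds suffice. After this exchange $v$ knows every edge in $N(v) \times N(v)$, hence knows the entire subgraph induced by $\{v\} \cup N(v)$, and can locally enumerate all copies of $K_p$ that contain $v$. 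Since distinct low-degree nodes only ever use their own incident edges, and each such edge is used by at most its two endpoints, running this for all low-degree nodes simultaneously increases congestion by at most a factor of $2$, so the whole thing still runs in $O(n^{1/2})$ rounds. After $v$ has listed its cliques we delete $v$ and its incident edges; this is safe because every $K_p$ through $v$ has already been reported.

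For the second part, assume $\avgdeg < n^{1/2}$, i.e.\ $2|E| = \sum_{w \in V} \deg(w) < n \cdot n^{1/2}$. Let $L = \{ w \in V : \deg(w) \leq 2n^{1/2} \}$ be the set of nodes removed. If $|L| < n/2$, then more than $n/2$ nodes have degree strictly greater than $2n^{1/2}$, so $\sum_{w \in V} \deg(w) > (n/2) \cdot 2n^{1/2} = n^{3/2}$, contradicting $\sum_{w} \deg(w) < n^{3/2}$. Hence $|L| \geq n/2$, i.e.\ at least half the nodes are removed. This is the standard Markov-type degree-counting argument and is the only genuinely new content of the claim; everything else is bookkeeping for the procedure already spelled out in the text.

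I do not expect any real obstacle here: the claim is essentially a summary of the preceding paragraph plus a one-line counting inequality. The only point requiring a little care is making explicit that the parallel execution over all low-degree nodes does not blow up the round complexity, which follows because each edge is incident to at most two nodes and is therefore used in at most two of the simultaneous exhaustive searches, so congestion grows by a constant factor only. One should also note, for cleanliness, that the degree threshold $2n^{1/2}$ (rather than $n^{1/2}$) is what makes the counting bound $(n/2)\cdot 2n^{1/2} = n^{3/2}$ match $2|E| < n^{3/2}$ exactly.
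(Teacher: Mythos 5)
Your proposal is correct and follows the same route as the paper: the paper dismisses the first assertion as ``straightforward by exhaustive search'' and proves the second by exactly your counting argument (if fewer than $n/2$ nodes had degree at most $2n^{1/2}$, the average degree would exceed $(2n^{1/2}\cdot n/2)/n = n^{1/2}$, a contradiction). The only difference is that you spell out the broadcast/acknowledgement mechanics and the constant-factor congestion bound for the parallel execution, which the paper leaves implicit.
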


\begin{proof}
The fact that we can list all such instances within $O(n^{1/2})$ rounds is straightforward by exhaustive search.

Assume the case in which $\avgdeg < n^{1/2}$. If the number of nodes with degree at most $2n^{1/2}$ is less than $n/2$, then the average degree $\avgdeg$ is more than $(2n^{1/2} \cdot n/2 )/n$, which contradicts the assumption.
\end{proof}

\sloppy{After the above procedure, the algorithm proceeds as follows. We run the $(1/\polylog(n),1/\polylog(n))$-expander decomposition of~\cite{Chang+PODC19}, as stated in Theorem~\ref{thm-expander-decomposition}. Denote by $E_m$ the edges within clusters and by $E_r$ the remaining edges. Recall that we have the following properties: 
}
\begin{enumerate}
    \item\label{item:intercluster} $|E_r| \leq |E|/\polylog(n)$.
    \item\label{item:mixing} The mixing time within each cluster $C$ is $O(\polylog(n))$.
    \item\label{item:deg} For a node $v \in V_C$ for any cluster $C=(V_C, E_C)$, it holds that $\deg_C(v) \geq \deg(v)/\polylog(n)$.
\end{enumerate}

We now consider two cases, depending on the size $|V_C|$ of a cluster.

Let $\beta>1$ be some constant threshold which we will fix later.
Suppose $|V_C| < \beta n^{1-2/p}$, and consider a node $v \in V_C$. In this case we again follow an exhaustive search approach for $v$: The node~$v$ learns all of $N(v) \times N(v)$ by sending $N(v)$ to all nodes in $N(v)$ and receiving an ack from each recipient about each neighbor. This implies that all instances of $K_p$ involving $v$ are listed by~$v$.

\begin{claim}
\label{claim:smallC}
Let $\beta>1$ be some constant and let $C$ be a cluster such that $|V_C| < \beta n^{1-2/p}$ and consider a node $v\in V_C$.
Then, listing all instances of $K_p$ involving $v$ can be done within $\tilde{O}(n^{1-2/p})$ rounds. This can be done in parallel for all such clusters $C$ and nodes $v$.
\end{claim}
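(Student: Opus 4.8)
The plan is to exploit the degree guarantee from the expander decomposition together with the size bound on the cluster. Since $C$ is a cluster in the decomposition with $|V_C|>1$, property~\ref{item:deg} gives $\deg_C(v)\ge \deg(v)/\polylog(n)$ for each $v\in V_C$. The key point is that the copies of $K_p$ involving $v$ that we need to list are determined entirely by the edges inside $N(v)\times N(v)$, so it suffices for $v$ to learn all such edges. First I would bound $\deg_C(v)$: because $v\in V_C$ and $|V_C|<\beta n^{1-2/p}$, we have $\deg_C(v)<|V_C|<\beta n^{1-2/p}$.

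Next I would show that $v$ can learn all of $N(v)\times N(v)$ within $\tilde O(n^{1-2/p})$ rounds using only its incident edges for communication. Node $v$ broadcasts the list $N(v)$ to all of its neighbors; since $|N(v)|=\deg(v)$ and each identifier is $O(\log n)$ bits, and $v$ can send one message per incident edge per round, this takes $O(\deg(v))=O(|V_C|\cdot\polylog(n))$ rounds — but here I should be a bit more careful, since $\deg(v)$ counts all of $V$, not just $V_C$. However, using property~\ref{item:deg}, $\deg(v)=O(\deg_C(v)\cdot\polylog(n))=O(|V_C|\cdot\polylog(n))=\tilde O(n^{1-2/p})$. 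So the broadcast of $N(v)$ across the $\deg(v)$ incident edges takes $\tilde O(1)$ rounds per recipient if pipelined, giving $\tilde O(\deg(v))=\tilde O(n^{1-2/p})$ rounds total; then each $u\in N(v)$ sends back an acknowledgement for each common neighbor, again $\tilde O(n^{1-2/p})$ rounds. After this $v$ knows every edge of $N(v)\times N(v)$ and hence can locally enumerate all $K_p$ containing $v$.

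Then I would note that this can be run in parallel across all clusters $C$ with $|V_C|<\beta n^{1-2/p}$ and all $v\in V_C$: the communication performed by $v$ uses only edges incident to $v$, and each edge $\{u,v\}$ is used by at most its two endpoints, so congestion only increases by a constant factor. Hence the overall round complexity remains $\tilde O(n^{1-2/p})$.

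The main obstacle, though minor, is making the degree accounting honest: one must not conflate $\deg(v)$ with $|V_C|$ directly, but route through property~\ref{item:deg} of the expander decomposition to get $\deg(v)=\tilde O(|V_C|)=\tilde O(n^{1-2/p})$. A secondary point worth a sentence is that the exhaustive-search communication is entirely local (over edges incident to $v$), which is what makes the parallelization across all small clusters free; one does not even need expander routing here, unlike in Theorem~\ref{th:spars}.
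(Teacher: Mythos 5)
Your proof is correct and follows essentially the same route as the paper's: bound $\deg(v)\le\deg_C(v)\cdot\polylog(n)<|V_C|\cdot\polylog(n)=\tilde O(n^{1-2/p})$ via Property~\ref{item:deg}, have $v$ broadcast $N(v)$ to its neighbors and collect acknowledgements to learn $N(v)\times N(v)$, and observe that the communication is confined to edges incident to $v$ so all small clusters and nodes run in parallel. The degree-accounting caveat you flag is exactly the step the paper makes explicit.
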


\begin{proof}
By Property~\ref{item:deg} above, we have that $\deg(v) \leq \deg_{C}(v)\cdot\polylog(n)$. Hence, $|N(v)| \leq |V_C|\cdot\polylog(n)$ and the bound on $|V_C|$ implies that $|N(v)| \leq \tilde{O}(n^{1-2/p})$. Thus, sending $N(v)$ to all nodes in $N(v)$ completes in $\tilde{O}(n^{1-2/p})$ rounds. Receiving the acks takes another $\tilde{O}(n^{1-2/p})$ (although up to a single round this can be done concurrently with the sending). Finally, notice that for each such~$v$ the communication only takes place with $v$  itself, and so this can be done in parallel for all such $C$ and $v$.
\end{proof}

In particular, Claim~\ref{claim:smallC} leaves us only with sufficiently large clusters. Hence, from now on we suppose $|V_C| \geq \beta n^{1-2/p}$.

Consider the nodes in $V\setminus V_C$, and denote $$S^{*}_C =\{ u \not\in V_C \mid 1 \leq \deg_{C}(u) < \deg_{V\setminus V_C}(u) / n^{1-2/p} \}.$$ Note that every $u \in S^{*}_C$ has a lower bound of $n^{1-2/p}$ on its degree. Also, for every $u \in S^{*}_C$ it holds that $\deg_C(u) \leq n^{2/p}$, and thus the total number of edges between nodes in $C$ and nodes in $S^{*}_C$ is at most  $n^{1+2/p}$.  We say that a node $v \in V_C$ is a \emph{bad node} if it has more than $n^{1-2/p}$ neighbors in $S^{*}_C$. These nodes are denoted by $S_C = \{v\in V_C \mid \deg_{S^{*}_C}(v)> n^{1-2/p}\}$. 

\begin{claim}
\label{claim:bad2largeC}
It holds that $\sum_{C \text{ such that } |V_C| \geq \beta n^{1-2/p}}{|S_C|^2} \leq (4/\beta)|E|$, where $\beta>1$ is a constant.
\end{claim}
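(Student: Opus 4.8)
The plan is a straightforward double counting, where the one idea that makes it work is to bound the edge count between a cluster and its light-node set in terms of $|E|$ (not in terms of $n$), so that the final bound scales with the number of edges of the graph rather than with $n^2$.

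First I would bound, for a single fixed cluster $C$ with $|V_C|\ge\beta n^{1-2/p}$, the quantity $e_C := |E(S_C,S^\ast_C)|$ from two sides. On one hand, every $v\in S_C$ has $\deg_{S^\ast_C}(v)>n^{1-2/p}$ by definition of $S_C$, so $e_C>|S_C|\cdot n^{1-2/p}$. On the other hand, writing $e_C=\sum_{u\in S^\ast_C}\deg_{S_C}(u)\le\sum_{u\in S^\ast_C}\deg_{V_C}(u)$ and using the defining inequality $\deg_{V_C}(u)<\deg_{V\setminus V_C}(u)/n^{1-2/p}$ of $S^\ast_C$, I get
\[
e_C<\frac{1}{n^{1-2/p}}\sum_{u\in S^\ast_C}\deg_{V\setminus V_C}(u)\le\frac{1}{n^{1-2/p}}\cdot 2\,|E(V\setminus V_C,V\setminus V_C)|\le\frac{2|E|}{n^{1-2/p}},
\]
where the middle step is the handshake identity applied to the subgraph induced by $V\setminus V_C$. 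Combining the two estimates yields $|S_C|\cdot n^{1-2/p}<2|E|/n^{1-2/p}$, i.e. $|S_C|<2|E|/n^{2-4/p}$.

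Next I would count the large clusters. Since the clusters of the expander decomposition partition $V$ and each considered cluster has at least $\beta n^{1-2/p}$ nodes, there are at most $n/(\beta n^{1-2/p})=n^{2/p}/\beta$ of them. Summing the per-cluster bound,
\[
\sum_{C\,:\,|V_C|\ge\beta n^{1-2/p}}|S_C|^2<\frac{n^{2/p}}{\beta}\cdot\left(\frac{2|E|}{n^{2-4/p}}\right)^2=\frac{4|E|^2}{\beta\, n^{4-10/p}}.
\]
Finally I would plug in the trivial bound $|E|\le\binom{n}{2}\le n^2/2$ into one of the two factors of $|E|^2$, obtaining $\sum_C|S_C|^2<\dfrac{2|E|\cdot n^{10/p-2}}{\beta}$, and since $p>4$ forces $10/p-2\le 0$ we have $n^{10/p-2}\le 1$, so $\sum_C|S_C|^2<2|E|/\beta\le 4|E|/\beta$, as claimed.

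The only real subtlety—and the step I expect to be the crux—is the bound on $e_C$: the naive route (use $|S^\ast_C|\le n$ and $\deg_{V_C}(u)\le n^{2/p}$, hence $e_C\le n^{1+2/p}$, hence $|S_C|<n^{4/p}$, hence $\sum_C|S_C|^2<n^{10/p}/\beta$) is useless when $G$ is sparse. Replacing $n^{1+2/p}$ by $2|E|/n^{1-2/p}$, via summing the $S^\ast_C$-degrees inside $V\setminus V_C$ and applying the handshake lemma, is what makes the final quantity scale linearly in $|E|$; everything else is bookkeeping. Note this is also exactly where $p>4$ is used — for $p=4$ the exponent $10/p-2$ becomes $1/2>0$ and the argument breaks, which is consistent with the separate treatment of $K_4$ in Section~\ref{section:k4}.
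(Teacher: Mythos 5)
Your proposal is correct and follows essentially the same argument as the paper: the same double counting of $|E(S_C,S^\ast_C)|$ (lower-bounded via $\deg_{S^\ast_C}(v)>n^{1-2/p}$ for $v\in S_C$, upper-bounded via the defining inequality of $S^\ast_C$ and the handshake bound $\sum_{u\in S^\ast_C}\deg_{V\setminus V_C}(u)\le 2|E|$) yields $|S_C|<2|E|/n^{2-4/p}$, and then summing over the at most $n^{2/p}/\beta$ large clusters and using $|E|=O(n^2)$ gives the claim for $p\ge 5$. The only cosmetic difference is that you invoke $|E|\le n^2/2$ at the end of the computation rather than mid-way as the paper does, which changes nothing of substance.
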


\begin{proof}
Clearly, the total number of edges $|E|$ is at least the number of edges that touch nodes in $S^{*}_C$. We bound this from below, and we do so by only counting edges that touch nodes in $S^{*}_C$ but \emph{do not} touch nodes in $V_C$, by summing $\deg_{V\setminus V_C}(u)$ over all $u \in S^{*}_C$ and dividing by two, due to possible double counting. This gives that 
\begin{align*}
2|E| &\geq \sum_{u \in S^{*}_C}{\deg_{V\setminus C}(u)} > n^{1-2p}\sum_{u \in S^{*}_C}{\deg_{C}(u)} \\
&\geq n^{1-2p}\sum_{v \in S_C}{\deg_{S^{*}_C}(v)} > n^{1-2p}\cdot n^{1-2p} \cdot |S_C| =  n^{2-4p} |S_C|.
\end{align*}
This implies that $|S_C| \leq 2|E|/n^{2-4/p}$. Also, since $|E| \leq n^2$, this implies that $|S_C| \leq 2n^{4/p}$. Thus, the total number of edges within $S_C$ is at most $|S_C| \cdot |S_C| \leq 2n^{4/p} \cdot 2|E|/n^{2-4/p} = 4|E|/n^{2-8/p}$. Since there are at most $n^{2/p}/\beta$ clusters of size $|V_C| \geq \beta n^{1-2/p}$, we have that the total number of edges within $S_C$ over all such clusters is at most $(n^{2/p}/\beta) \cdot (4|E|/n^{2-8/p}) \leq (4/\beta) |E|/ n^{2-10/p}$. For every $p \geq 5$, this is at most $(4/\beta)|E|$.
\end{proof}

We now show that within $\tilde{O}(n^{1-2/p})$ rounds we can list all instances of $K_p$ which have an edge in $(V_C\times V_C) \setminus (S_C \times S_C)$ for some of the clusters $C$. Along with the edges in $E_r$ and the edges in $S_C \times S_C$, there will still be some additional clusters for which we will need to defer their edges to following iterations.

First, notice that for any $v\in V_C\setminus S_C$ we have $\deg_{S^{*}_C}(v) \leq n^{1-2/p}$, and thus $v$ can learn about all edges in $(N(v) \cap S^{*}_C) \times (N(v) \cap S^{*}_C)$  by an exhaustive search of sending $N(v) \cap S^{*}_C$ to all nodes in $N(v) \cap S^{*}_C$ and receiving an ack from each recipient about each neighbor. Similarly to how it is done
in Claim~\ref{claim:smallC}, we can do this in parallel for all such $C$ and $v$, and it completes within $O(n^{1-2/p})$ rounds. 

Second, notice that for every $u \not\in V_C$ for which $u\not\in S^{*}_C$, it holds that  $\deg_{C}(u) \geq \deg_{V\setminus C}(u) / n^{1-2/p} $. Thus, $u$ can make each of its edges be known to some node in $V_C$ by batching its edges into non-overlapping chunks of at most $n^{1-2/p}$ edges, and sending each chunk to a different neighbor of $u$ in $C$. This completes in within $n^{1-2/p}$ rounds, in parallel for all such $C$ and $u$.

We now have that for each $v \in V_C\setminus S_C$, all edges between its neighbors are known to some nodes in $V_C$. It remains to show how each cluster lists the instances of $K_p$ that are contained in the set of edges known to it. 

Our goal is now to utilize our sparsity-aware listing algorithm given in Theorem~\ref{th:spars}, for which we need some good lower bounds on the average degree of the cluster. To this end, we simply defer all clusters whose average degree is too small to the next iteration. Formally, as in Theorem~\ref{th:spars}, denote by $E'\subseteq E(V\setminus V_C,V\setminus V_C)$ the edges that do not touch the nodes of $C$ that are now distributed among the nodes of $C$. The cluster $C$ computes $|E'|$ and $|E_C/V_C|$ within $\polylog(n)$ rounds, by Theorem~\ref{thm-expander-routing}.

Let $\gamma,\gamma'>1$ be two additional constants which we will fix later.
If $|E_C|/|V_C| \leq  |E'|/\gamma n$ or if $|E_C|/|V_C| <  n^{1/2}/\gamma'$, then we say that $C$ is a \emph{low-average cluster}. Let $E_{\operatorname{low}}$ be the set of all edges in low-average clusters (i.e., $E_{\operatorname{low}}$ is the union of $E_C$ for all low-average clusters $C$).
All edges in $E_{\operatorname{low}}$
are deferred to the next iteration, and we claim that this defers only another small constant fraction of the edges, if the average degree in the graph is sufficiently large. 
\begin{claim}
\label{claim:lowAvgC}
If $\avgdeg\geq n^{1/2}$, then $|E_{\operatorname{low}}| \leq (1/\gamma+1/\gamma') |E|$.
\end{claim}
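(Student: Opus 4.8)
The plan is to bound $|E_{\operatorname{low}}|$ by splitting the low-average clusters into two groups according to which of the two defining inequalities causes a cluster to be low-average, and to bound the total number of edges in each group separately by $(1/\gamma)|E|$ and $(1/\gamma')|E|$ respectively.

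First I would handle the clusters for which $|E_C|/|V_C| \le |E'|/(\gamma n)$. Here the key observation is that $E'$, the set of edges not touching $V_C$ that are distributed to $C$, is a subset of $E(V \setminus V_C, V \setminus V_C)$, hence $|E'| \le |E|$. Therefore for each such cluster $|E_C| \le |V_C| \cdot |E|/(\gamma n)$. Summing over all such clusters and using that the clusters $V_C$ are disjoint subsets of $V$, so $\sum_C |V_C| \le n$, gives $\sum_C |E_C| \le (|E|/(\gamma n)) \sum_C |V_C| \le |E|/\gamma$. This contributes at most $(1/\gamma)|E|$ to $|E_{\operatorname{low}}|$.

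Next I would handle the clusters for which $|E_C|/|V_C| < n^{1/2}/\gamma'$. For each such cluster $|E_C| < |V_C| \cdot n^{1/2}/\gamma'$. Summing again over disjoint $V_C$, $\sum_C |E_C| < (n^{1/2}/\gamma') \sum_C |V_C| \le n^{3/2}/\gamma'$. To turn this into a bound in terms of $|E|$, I would invoke the hypothesis $\avgdeg \ge n^{1/2}$: since $\avgdeg = 2|E|/n$, this gives $|E| \ge n^{3/2}/2$, so $n^{3/2} \le 2|E|$ and hence $\sum_C |E_C| < 2|E|/\gamma'$. (If one prefers the cleaner constant, one can either absorb the factor of $2$ into $\gamma'$ at the point where $\gamma'$ is eventually fixed, or note that redefining the constant in the statement only changes later constants; alternatively phrase the clean inequality as $(2/\gamma')|E|$, which is harmless since $\gamma'$ is chosen later.) Adding the two contributions yields $|E_{\operatorname{low}}| \le (1/\gamma + 1/\gamma')|E|$ as claimed, where I would just note the factor of $2$ can be folded into the choice of $\gamma'$.

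The main (minor) obstacle is simply keeping the bookkeeping on constants consistent with the exact statement: the naive computation for the second group produces $(2/\gamma')|E|$ rather than $(1/\gamma')|E|$, so I would either state explicitly that the constant is absorbed when $\gamma'$ is later fixed, or carry the factor $2$ through. There is no real mathematical difficulty here — the only inputs are the disjointness of the clusters' vertex sets (giving $\sum_C |V_C| \le n$), the trivial containment $E' \subseteq E(V \setminus V_C, V \setminus V_C)$ so $|E'| \le |E|$, and the assumed lower bound $\avgdeg \ge n^{1/2}$ to convert an absolute bound on $\sum_C |E_C|$ into a fraction of $|E|$.
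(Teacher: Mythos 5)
Your proof is correct and follows essentially the same route as the paper: split the low-average clusters according to which of the two defining inequalities they satisfy, bound $|E_C|$ as $|V_C|\cdot(|E_C|/|V_C|)$ using $|E'|\le|E|$ for the first group and $\avgdeg\ge n^{1/2}$ for the second, and sum over the disjoint cluster vertex sets. Your $(2/\gamma')|E|$ for the second group is in fact the honest constant (the paper's own arithmetic there is loose by that same factor of $2$), and, as you note, this is harmlessly absorbed when $\gamma'$ is fixed later.
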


\begin{proof}
Recall that $\avgdeg$ denotes the average degree in the graph. Since $|E'|\leq |E|$, by considering all clusters $C$ that satisfy $|E_C|/|V_C| < |E'|/\gamma n$, we consider nodes with average degree inside their cluster bounded from above by $\avgdeg/\gamma$, therefore the number of edges within all such clusters is at most $\avgdeg n/2\gamma$ edges. However, 
by Property~\ref{item:intercluster} of the decomposition, we know that the number of edges within the clusters (inside $E_m$) is at least $|E|/2$, which is at least $\avgdeg n/4$. Therefore, the total number of edges in all clusters for which $|E_C|/|V_C| < |E'|/\gamma n$ is at most $\avgdeg n/2\gamma = (2/\gamma)\avgdeg n/4 \leq (2/\gamma)|E|/2 = (1/\gamma)|E|$.

Similarly, if $\avgdeg\geq n^{1/2}$, by considering all low-average clusters $C$ for which $|E_C|/|V_C| <  n^{1/2}/\gamma'$, we consider nodes with average degree inside the cluster bounded from above by $\avgdeg/\gamma'$, therefore the total number of edges in all such clusters is at most $\avgdeg n/2\gamma'$. 

Therefore, the total number of edges in all clusters for which $|E_C|/|V_C| <  n^{1/2}/\gamma'$ is at most $\avgdeg n/2\gamma' = (2/\gamma')\avgdeg n/4 \leq (2/\gamma')|E|/2 = (1/\gamma')|E|$.

To summarize, we obtain that $|E_{\operatorname{low}}| \leq (1/\gamma)|E| + (1/\gamma')|E| = (1/\gamma+1/\gamma')|E|$, as claimed.
\end{proof}

We show that the conditions of the theorem hold for any cluster $C=(V_C,E_C)$ with $|V_C|\geq \beta n^{1-2/p}$ which is \emph{not} a low-average cluster. 

First, by Property~\ref{item:mixing} above, we have that the mixing time of $C$ is $O(\polylog(n))$. Furthermore, by Property~\ref{item:deg}, we have that for every $u\in V_C$ it holds that $\deg(u)$ is at most $\tilde{O}(\deg_C(u))$, which implies that $\deg_{\bar{E}}(u)$ is also bounded by $\tilde{O}(\deg_C(u))$, where $\bar{E}$ is a subset of $E(V_C,V\setminus V_C)$. In addition, if we consider the set of edges $E'\subseteq E(V\setminus V_C,V\setminus V_C)$ that are distributed among the nodes of $C$, then indeed each node $u\in V_C$ holds no more than $O(n^{1-2/p}\cdot \deg_C(u))$ of those edges. The reason is that each edge of $E'$ that reaches $u$, reaches it through one of its edges in $E(V_C,V\setminus V_C)$. There are at most $\tilde{O}(\deg_C(u))$ edges in the latter, and only $O(n^{1-2/p})$ rounds, which implies the required bound on the amount of information that they convey to $u$.

Finally, we note that the bounds required by Equation~\ref{ineq} hold for $C$ which is not a low-average cluster, by definition. 

Since all required conditions hold, we now execute the algorithm provided by Theorem~\ref{th:spars} on each cluster $C$ with $|V_C|\geq \beta n^{1-2/p}$ which is not a low-average cluster. We then remove all edges $(V_C\times V_C) \setminus (S_C \times S_C)$ in each such cluster $C$ and continue to the next iteration with all edges in $E_r \cup \bigcup_{C \text{ such that } |V_C| \geq \beta n^{1-2/p}}(S_C \times S_C) \cup E_{\operatorname{low}}$.

By Property~\ref{item:intercluster} above, Claim~\ref{claim:bad2largeC}, and Claim~\ref{claim:lowAvgC}, there are no more than $(1/\polylog(n)+4/\beta+ 1/\gamma +1/\gamma')|E|$ remaining edges if the average degree $\avgdeg$ of the graph is at least $\avgdeg \geq n^{1/2}$. Choosing $\beta=32$, $\gamma=8$, and $\gamma'=8$, gives that there are at most $4(1/8)|E|=|E|/2$ remaining edges.

In case the average degree $\avgdeg$ of the graph is bounded by $\avgdeg < n^{1/2}$, the initial exhaustive search procedure for nodes of degree at most $2n^{1/2}$ removes at least half of the nodes by Claim~\ref{claim:lowDeg}.

This implies that we complete within a logarithmic number of iteration, which proves that within  $\tilde{O}(n^{1-2/p})$ rounds we list all instances of $K_p$, for $p>4$.

\section{Optimal $K_4$-listing Algorithm}
\label{section:k4}

We follow the same high-level framework of the algorithm for listing $K_p$ for $p > 4$. That is, we deal with all nodes $v$ with $\deg(v) \leq 2n^{1/2}$ by an exhaustive search approach in $O(\sqrt{n})$ rounds and remove them from the graph. Then we compute an expander decomposition using  Theorem~\ref{thm-expander-decomposition}. Recall that if the average degree $\mu$ at the beginning is at most $n^{1/2}$, then at least half of the nodes will be removed by   Claim~\ref{claim:lowDeg}.

After that, we apply Claim~\ref{claim:smallC} to deal with all clusters $C$ with $|V_C| < \beta n^{1 - 2/p} = \beta n^{1/2}$ in $\tilde{O}(n^{1/2})$ rounds, and then the nodes in these clusters and their incident edges are removed.  Denote~$\Cset$ the set of remaining clusters. Note that each $C \in \Cset$ has $|V_C| \geq \beta n^{1/2}$, and so the number of these clusters is $|\Cset| = O(n^{1/2})$.

What makes the case $p=4$ different from the case $p > 4$ is that Claim~\ref{claim:bad2largeC} does not hold when $p = 4$. To deal with this issue, we will consider a different approach to listing cross-cluster $K_4$. We cover all edges in the graph by clusters with small mixing time by recursively computing an expander decomposition of the subgraph induced by the inter-cluster edges for $O(\log n)$ iterations. Denote $\Cstarset$ as the union of $\Cset$ and the set of clusters in these expander decompositions.
The following claim allows us to process clusters in $\Cstarset$ in parallel with small overhead.

\begin{claim}\label{claim:smalloverlap}
Each node $v$ belongs to at most $O(\log n)$ distinct clusters in $\Cstarset$.
\end{claim}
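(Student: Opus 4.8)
The plan is to track how the clusters in $\Cstarset$ are produced and bound, for a fixed node $v$, the number of clusters containing $v$ at each level of the recursive construction. Recall that $\Cstarset$ is built as follows: at the top level we have an expander decomposition whose clusters (after discarding the small ones) form $\Cset$; then we recurse on the subgraph induced by the inter-cluster edges $E_r$, compute a new expander decomposition of that subgraph, recurse again on its inter-cluster edges, and so on for $O(\log n)$ iterations; $\Cstarset$ is the union of $\Cset$ with all clusters produced in all these recursive expander decompositions. The key structural fact I would use is that an expander decomposition is a \emph{partition} of the node set of the graph it is applied to. Hence within a single level of the recursion, the clusters are vertex-disjoint, so $v$ belongs to at most one cluster per level.

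First I would make precise that the recursion has depth $O(\log n)$: each expander decomposition from Theorem~\ref{thm-expander-decomposition} leaves only an $\epsilon = 1/\polylog(n) \le 1/2$ fraction of the edges as inter-cluster edges, so after $O(\log n)$ iterations the residual edge set is empty and the recursion terminates; all edges are covered by clusters of bounded mixing time. Thus the recursion produces clusters organized into $O(\log n)$ levels. Second, I would observe that at level $i$ the expander decomposition is applied to the graph $G_i$ induced by the edges surviving to level $i$, and its output is a partition $V(G_i) = V_1 \cup \cdots \cup V_{k_i}$ of the \emph{vertex set} of $G_i$; since these parts are pairwise disjoint, $v$ lies in at most one part $V_j$ at level $i$. (Isolated vertices of $G_i$, or singleton clusters, contribute no edges and are irrelevant, but even counting them generously $v$ is in at most one cluster per level.) Summing over the $O(\log n)$ levels gives that $v$ belongs to at most $O(\log n)$ clusters in $\bigcup_i (\text{level-}i\text{ clusters})$, and adding the one possible cluster of $\Cset$ containing $v$ (which is itself a level-$0$ cluster in this accounting) keeps the total at $O(\log n)$.

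The one point that requires a little care — and which I expect to be the only real obstacle — is confirming that each level genuinely contributes at most one cluster containing $v$, i.e., that the recursive decompositions are indeed applied to a single graph whose clusters partition its vertices, rather than, say, being run separately inside each surviving component with the counts multiplying. As long as the construction is "compute one expander decomposition of the whole inter-cluster graph, then recurse", the partition property makes the bound immediate: the number of clusters containing $v$ is at most the number of levels, which is $O(\log n)$. I would therefore state the recursion explicitly in this "global decomposition per level" form and then the claim follows directly from the disjointness of the parts within each level together with the $O(\log n)$ bound on the recursion depth.
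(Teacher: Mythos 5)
Your argument is exactly the paper's: the clusters in $\Cstarset$ arise from $O(\log n)$ recursive expander decompositions (the recursion terminates in $O(\log n)$ iterations since each decomposition leaves only a $1/\polylog(n)$ fraction of edges inter-cluster), and within each decomposition the clusters are node-disjoint, so each node lies in at most one cluster per level. Your elaboration of the ``one global decomposition per level'' point is a correct reading of the construction, and the proposal matches the paper's proof.
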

\begin{proof}
This follows from the fact that the clusters in $\Cstarset$ are from $O(\log n)$ distinct expander decompositions, and in each expander decomposition the clusters are node-disjoint.
\end{proof}

For all $K_4$ that completely reside in some cluster $C \in \Cset$, we can list all of them by applying Theorem~\ref{th:spars} with $\bar E = \emptyset$, $E' = \emptyset$ and $p=4$ to $C$. This costs $\tilde{O}(n^{1 - 2/p}) = \tilde{O}(\sqrt{n})$ rounds.

Now, for each remaining  $K_4=\{v_1,v_2,v_3,v_4\}$ with at least one edge $\{v_1,v_2\} \in E_C$ in a cluster $C \in \Cset$, we must have $\{v_3,v_4\} \in E_{C^\ast}$ for some  cluster $C^\ast \in \Cstarset \setminus\{C\}$. Our strategy for  listing those cross-cluster $K_4$ is to go over all pairs of clusters $C \in \Cset$ and $C^\ast \in \Cstarset \setminus\{C\}$ and to transmit edges between $C$ and $C^\ast$ in such a way that each $K_4$ crossing $C$ and $C^\ast$ is learned by one of $C$ and $C^\ast$ so that Theorem~\ref{th:spars} can be applied to list them.  For any distinct clusters $C$ and $C^\ast$, define the sets
\begin{align*}
\Sstar_{C^\ast\to C}&= \{u\in V_{C^\ast} \:|\:1\le \deg_C(u) <\deg_{C^\ast}(u)/ \sqrt{n}\},\\
S_{C\to C^\ast}&=\{u\in V_{C}\:|\:\deg_{\Sstar_{C^\ast\to C}}(u)> \sqrt{n}\},
\end{align*}
which are analogous to the two sets  $\Sstar_C$  and $S_C$ defined in Section~\ref{section:Kp}.

Our algorithm consists of three parts.

\paragraph*{First part.} The first part of the algorithm is as follows.
\begin{itemize}
\item[1.]
Each cluster $C^\ast  \in \Cstarset$ does as follows. For each $C \in \Cset \setminus \{C^\ast\}$, each node $u\in V_{C^\ast}\setminus \Sstar_{C^\ast\to C}$ such that $\deg_C(u)\ge 1$ 
sends the set $N_{C^\ast}(u)$ to $C$, which can be done in $\tilde{O}(\sqrt{n})$ rounds, due to the inequality $\deg_{C}(u)\ge \deg_{C^\ast}(u)/\sqrt{n}$ and Claim~\ref{claim:smalloverlap}. 
\item[2.]
Each cluster $C \in \Cset$ then does as follows. Similar to the case of $p>4$, denote by $E'\subseteq E(V\setminus V_C,V\setminus V_C)$ the edges that do not touch the nodes of $C$ that are now distributed among the nodes of $C$ due to Step~1. The cluster $C$ computes $|E'|$ and $|E_C/V_C|$ within $\polylog(n)$ rounds, by Theorem~\ref{thm-expander-routing}.
Recall that $C$ is a low-average cluster if $|E_C|/|V_C| \leq  |E'|/\gamma n$ or if $|E_C|/|V_C| <  n^{1/2}/\gamma'$. All the edges $E_C$ in each low-average cluster $C$ are deferred to the next iteration. By Claim~\ref{claim:lowAvgC},  this defers only a small constant fraction of the edges, if the average degree at the beginning satisfies $\mu \geq n^{1/2}$.
Now suppose $C$ is not low-average w.r.t.~$E'$.  Let $\bar E = E(V_C,V \setminus V_C)$, and then the cluster $C$ uses Theorem~\ref{th:spars} to list all $K_4$ in the subgraph induced by  $E' \cup E_C \cup \bar E$ that have at least one edge in $E_C$, and this takes $\tilde{O}(\sqrt{n})$ rounds.  Note that since $C$ is not low-average, all the required conditions in Theorem~\ref{th:spars} are met.
\end{itemize}

This first part enables each   cluster $C \in \Cset$ that is not low-average w.r.t.~$E'$ to list all 4-cliques $\{v_1,v_2,v_3,v_4\}$ such that $v_3\in V_{C^\ast}\setminus \Sstar_{C^\ast\to C}$ or $v_4\in V_{C^\ast}\setminus \Sstar_{C^\ast\to C}$ holds for some $C^\ast \in \Cstarset \setminus \{C\}$.

\paragraph*{Second part.}
The second part of the algorithm is as follows.
\begin{itemize}
\item[1.] 
Each cluster $C \in \Cset$ does as follows. For each $C^\ast \in \Cstarset \setminus \{C\}$, each node $u\in V_C\setminus S_{C\to C\ast}$ learns all edges in $N_{\Sstar_{C^\ast\to C}}(u)\times N_{\Sstar_{C^\ast\to C}}(u)$ from the cluster $C^\ast$. This can be done in $\tilde{O}(\sqrt{n})$  rounds due to the definition of $S_{C\to C^\ast}$ and Claim~\ref{claim:smalloverlap}.

\item[2.]
Each cluster $C \in \Cset$ then does as follows. Similar to the first part, let $\bar E = E(V_C,V \setminus V_C)$, and denote by $E'\subseteq E(V\setminus V_C,V\setminus V_C)$ the edges that do not touch the nodes of $C$ that are now distributed among the nodes of $C$ due to Step~1. 
In case $C$ is not low-average w.r.t.~$E'$, it uses Theorem \ref{th:spars} to list all $K_4$ in the subgraph induced by  $E' \cup E_C \cup \bar E$ that has at least one edge in $E_C$, and this takes $\tilde{O}(\sqrt{n})$ rounds.
\end{itemize} 

This second part enables each cluster $C \in \Cset$ that is not low-average w.r.t.~$E'$ to list all the cliques $\{v_1,v_2,v_3,v_4\}$ such that the following conditions hold for some $C^\ast \in \Cstarset \setminus \{C\}$:
\begin{itemize}
    \item $v_1\in V_C\setminus S_{C\to C^\ast}$ or $v_2\in V_C\setminus S_{C\to C^\ast}$,
    \item $v_3, v_4 \in \Sstar_{C^\ast\to C}$.
\end{itemize}

\paragraph*{Third part.}
The remaining task is listing cliques $\{v_1,v_2,v_3,v_4\}$ such that $v_1,v_2\in S_{C\to C^\ast}$ and $v_3,v_4\in \Sstar_{C^\ast\to C}$ hold for some clusters $C \in \Cset$ and $C^\ast \in \Cstarset \setminus \{C\}$. 
The third part of the algorithm performs this task as follows.  

\begin{itemize}
\item[1.]
Each  cluster $C \in \Cset$ 
does as follows. For each $C^\ast \in \Cstarset \setminus \{C\}$, each node $u\in S_{C\to C^\ast}$ sends the set of edges $\{ \{u,v\}\:|\:v\in N_{S_{C\to C^\ast}}(u) \}$ to $C^\ast$ by dividing this set into $\deg_{\Sstar_{C^\ast\to C}}(u)$ subsets of roughly the same size and sending each subset to one of its neighbors in $\Sstar_{C^\ast\to C}$. This can be implemented in $\tilde{O}(\sqrt{n})$ rounds, by the definition of $S_{C\to C^\ast}$ and Claim~\ref{claim:smalloverlap}.
\item[2.]
Each cluster $C^\ast \in \Cstarset$ does as follows.
Denote by $E'\subseteq E(V\setminus V_C,V\setminus V_C)$ the edges that do not touch the nodes of $C^\ast$ that are now distributed among the nodes of $C^\ast$ due to Step~1, and let $\bar E =\bigcup_{C \in \Cset \setminus \{C^\ast\}} E(S_{C\to C^\ast}, \Sstar_{C^\ast\to C})$. The cluster $C^\ast$ uses Theorem~\ref{th:spars} to list all $K_4$ in the subgraph induced by  $E' \cup E_{C^\ast} \cup \bar E$ that has at least one edge in $E_{C^\ast}$, and this takes $\tilde{O}(\sqrt{n})$ rounds.

\end{itemize}

The following claims show that all conditions of Theorem~\ref{th:spars} are satisfied. 
\begin{claim}
For each $C^\ast \in \Cstarset$, each node $v\in C^\ast$ is incident to $O(\deg_{C^\ast}(v))$ edges from $\bar{E}$. 
\end{claim}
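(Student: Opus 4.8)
The plan is to bound, for a fixed $C^\ast \in \Cstarset$ and a fixed $v \in V_{C^\ast}$, the number of edges of $\bar E=\bigcup_{C \in \Cset \setminus \{C^\ast\}} E(S_{C\to C^\ast}, \Sstar_{C^\ast\to C})$ incident to $v$ by splitting $\bar E$ into its at most $|\Cset|$ summands and treating one distinguished summand separately from the rest. The key structural facts I would invoke are that $\Cset$ consists of node-disjoint clusters (it comes from a single expander decomposition of the current graph), so there is \emph{at most one} cluster $C_0 \in \Cset$ with $v \in V_{C_0}$, and that $|\Cset| = O(\sqrt n)$, since every $C \in \Cset$ has $|V_C| \geq \beta n^{1/2}$ and $\sum_{C \in \Cset}|V_C| \le n$.

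For a cluster $C \in \Cset$ with $C \notin \{C^\ast, C_0\}$, the point is that $v \notin V_C$, hence $v \notin S_{C\to C^\ast}\subseteq V_C$. Therefore every edge of $E(S_{C\to C^\ast}, \Sstar_{C^\ast\to C})$ incident to $v$ must have $v$ as its $\Sstar_{C^\ast\to C}$-endpoint, which forces $v \in \Sstar_{C^\ast\to C}$ and, by the definition of $\Sstar_{C^\ast\to C}$, $\deg_C(v) < \deg_{C^\ast}(v)/\sqrt n$. Since the other endpoint of such an edge lies in $S_{C\to C^\ast}\subseteq V_C$, the number of these edges is at most $\deg_C(v) < \deg_{C^\ast}(v)/\sqrt n$.

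For the distinguished cluster $C = C_0$ (if it exists and differs from $C^\ast$), I would bound the contribution more crudely: an edge of $E(S_{C_0\to C^\ast}, \Sstar_{C^\ast\to C_0})$ incident to $v$ either has its other endpoint in $\Sstar_{C^\ast\to C_0}\subseteq V_{C^\ast}$, and there are at most $\deg_{C^\ast}(v)$ of these, or its other endpoint is in $S_{C_0\to C^\ast}\subseteq V_{C_0}$, in which case $v \in \Sstar_{C^\ast\to C_0}$ so $\deg_{C_0}(v) < \deg_{C^\ast}(v)/\sqrt n$ bounds their number. In total this single summand contributes $O(\deg_{C^\ast}(v))$ edges.

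Summing over all $C \in \Cset \setminus \{C^\ast\}$ then gives at most $O(\deg_{C^\ast}(v)) + |\Cset|\cdot \deg_{C^\ast}(v)/\sqrt n = O(\deg_{C^\ast}(v)) + O(\sqrt n)\cdot \deg_{C^\ast}(v)/\sqrt n = O(\deg_{C^\ast}(v))$, as claimed. The only point that needs a little care is the bookkeeping around clusters overlapping $v$, which can happen when $C^\ast$ is drawn from one of the recursive expander decompositions rather than from $\Cset$; isolating the unique $\Cset$-cluster $C_0$ through $v$ is exactly what neutralizes this, and the rest is a direct unfolding of the definitions of $S_{C\to C^\ast}$ and $\Sstar_{C^\ast\to C}$ together with the count $|\Cset| = O(\sqrt n)$.
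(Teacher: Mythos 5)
Your proof is correct and follows essentially the same route as the paper's: bound the contribution of each $C \in \Cset\setminus\{C^\ast\}$ with $v \in \Sstar_{C^\ast\to C}$ by $\deg_C(v) < \deg_{C^\ast}(v)/\sqrt{n}$ and multiply by $|\Cset| = O(\sqrt{n})$. Your separate treatment of the unique cluster $C_0 \in \Cset$ possibly containing $v$ (where $v$ could sit on the $S_{C_0\to C^\ast}$ side) is a point the paper's proof silently skips, and handling it adds only an $O(\deg_{C^\ast}(v))$ term, so it is a welcome extra bit of care rather than a divergence.
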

\begin{proof}
For each node $v\in V_{C^\ast}$ and any cluster $C \in \Cset \setminus \{C^\ast\}$ such that $v\in \Sstar_{C^\ast\to C}$, we have $\deg_C(u) <\deg_{C^\ast}(u)/ \sqrt{n}$. 
The number of edges from $\bar{E}$ incident to any $v\in V_{C^\ast}$ is thus at most
\begin{align*}
\sum_{C \in \Cset \setminus \{C^\ast\}, \:v\in \Sstar_{C^\ast\to C}}\deg_{S_{C\to C^\ast}}(v)&\le O(\sqrt{n}) \cdot \max_{C \in \Cset \setminus \{C^\ast\}, \:v\in \Sstar_{C^\ast\to C}} \{\deg_{S_{C\to C^\ast}}(v)\}\\
&\le O(\deg_{C^\ast} (v)),
\end{align*}
where the first inequality follows from the fact that $|\Cset| = O(\sqrt{n})$, and the second inequality follows from the definition of  $\Sstar_{C^\ast\to C}$.
\end{proof}
\begin{claim}
For each $C^\ast \in \Cstarset$, each node $v\in C^\ast$ receives  $O(\sqrt{n}\cdot \deg_{C^\ast}(v))$ edges at Step 1.
\end{claim}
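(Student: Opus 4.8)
The plan is to bound, for a fixed node $v \in V_{C^\ast}$, the total number of edges of $E'$ that are routed to $v$ during Step 1 of the third part, summed over all clusters $C \in \Cset \setminus \{C^\ast\}$. The key structural fact is that each such edge arrives at $v$ along one of $v$'s incident edges in the graph; more precisely, for each $C$ with $v \in \Sstar_{C^\ast\to C}$, node $v$ receives from the nodes of $S_{C\to C^\ast}$ only edges sent along its $\deg_{S_{C\to C^\ast}}(v)$ incident edges to $S_{C\to C^\ast}$, and each such incident edge is used to transmit a batch of at most $\tilde O(\sqrt{n})$ edges (this is exactly the batching guarantee that makes Step 1 run in $\tilde O(\sqrt n)$ rounds). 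So the count of edges reaching $v$ from a single $C$ is $\tilde O\!\bigl(\sqrt{n}\cdot \deg_{S_{C\to C^\ast}}(v)\bigr)$.

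Next I would sum this over $C$. By the definition of $\Sstar_{C^\ast\to C}$, whenever $v \in \Sstar_{C^\ast\to C}$ we have $\deg_C(v) < \deg_{C^\ast}(v)/\sqrt n$, hence $\deg_{S_{C\to C^\ast}}(v) \le \deg_C(v) < \deg_{C^\ast}(v)/\sqrt{n}$. Therefore each term in the sum is $\tilde O(\deg_{C^\ast}(v))$, and since there are only $|\Cset| = O(\sqrt n)$ clusters $C$ to sum over, the total would naively be $\tilde O(\sqrt n \cdot \deg_{C^\ast}(v))$ --- which is already the claimed bound. In fact this is the same computation as in the preceding claim but with an extra $\sqrt n$ factor coming from the batching, so the argument is essentially a one-line variant of that proof. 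Concretely:
\[
\sum_{C \in \Cset \setminus \{C^\ast\},\: v \in \Sstar_{C^\ast\to C}} \tilde O\!\bigl(\sqrt{n}\cdot \deg_{S_{C\to C^\ast}}(v)\bigr)
\le O(\sqrt n) \cdot O(\sqrt n) \cdot \max_C \{\deg_{S_{C\to C^\ast}}(v)\}
\le O(\sqrt{n}\cdot \deg_{C^\ast}(v)),
\]
using $|\Cset| = O(\sqrt n)$ for one factor and the definition of $\Sstar_{C^\ast\to C}$ for the $\max$.

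The only subtlety --- and what I expect to be the main point to get right --- is the per-edge bookkeeping in Step 1: one must confirm that a node $u \in S_{C\to C^\ast}$ genuinely sends its edge set $\{\{u,w\} : w \in N_{S_{C\to C^\ast}}(u)\}$ in chunks of size $\tilde O(\sqrt n)$ across its $\deg_{\Sstar_{C^\ast\to C}}(u)$ neighbors in $\Sstar_{C^\ast\to C}$, so that on the receiving side $v$ indeed gets at most $\tilde O(\sqrt n)$ edges per incident edge to $S_{C\to C^\ast}$, and that Claim~\ref{claim:smalloverlap} bounds the blow-up from $v$ lying in $O(\log n)$ clusters of $\Cstarset$ by only a $\polylog(n)$ factor, absorbed in the $\tilde O$. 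Once that accounting is pinned down, the summation above closes the claim.
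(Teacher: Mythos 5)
Your proposal is correct and follows essentially the same route as the paper's proof: bound the batch size per incident edge by $\sqrt{n}$ (since $u \in S_{C\to C^\ast}$ splits at most $n$ edges among more than $\sqrt{n}$ neighbors), note that $v$ receives from at most $\deg_{S_{C\to C^\ast}}(v) \le \deg_C(v) < \deg_{C^\ast}(v)/\sqrt{n}$ such senders per cluster by the definition of $\Sstar_{C^\ast\to C}$, and sum over the $O(\sqrt{n})$ clusters in $\Cset$. The "subtlety" you flag about the chunking in Step 1 is exactly the one explicit computation the paper makes, so nothing is missing.
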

\begin{proof}
At Step 1, each node $v\in V_{C^\ast}$ receives a message from some $u\in S_{C\to C^\ast}$ only if $v\in \Sstar_{C^\ast\to C}$. More precisely, $v$ receives at most $|N_{S_{C\to C^\ast}}(u)|/\sqrt{n}\le \sqrt{n}$ edges in $E_C$ from this node $u \in V_C$. Thus the total number of edges received by $v$ is at most
\begin{align*}
\sqrt{n}\cdot\sum_{C\in \Cset \setminus \{C^\ast\},\:v\in \Sstar_{C^\ast\to C}}\deg_{S_{C\to C^\ast}}(v)&\le 
\sqrt{n} \cdot \left( O(\sqrt{n}) \cdot \max_{C\in \Cset \setminus \{C^\ast\},\:v\in \Sstar_{C^\ast\to C}}\{\deg_{S_{C\to C^\ast}}(v)\}\right)\\
&\le 
O(\sqrt{n}\cdot \deg_{C^\ast}(v)),
\end{align*}
where we use again the fact that $|\Cset| = O(\sqrt{n})$ and the definition of  $\Sstar_{C^\ast\to C}$.
\end{proof}

\begin{claim}\label{claim2}
For each $C^\ast \in \Cstarset$, the average degree of $C^\ast$ is at least $\max_{C}\{|S_{C\to C^\ast}|\}$, and thus $|E_{C^\ast}|/|V_{C^\ast}| = \Omega(|E'|/n)$. 
\end{claim}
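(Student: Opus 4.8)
The plan is to bound $|S_{C\to C^\ast}|$ for each $C\in\Cset\setminus\{C^\ast\}$ by double-counting the edges between $S_{C\to C^\ast}$ and $\Sstar_{C^\ast\to C}$, and then to convert this per-cluster bound into a bound on $|E'|$ using the vertex-disjointness of the clusters in $\Cset$.

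First I would fix $C^\ast\in\Cstarset$ and an arbitrary $C\in\Cset\setminus\{C^\ast\}$ and count $|E(S_{C\to C^\ast},\Sstar_{C^\ast\to C})|$ from both sides. From the $S_{C\to C^\ast}$ side, every $u$ there has, by definition, more than $\sqrt n$ neighbors in $\Sstar_{C^\ast\to C}$, giving the lower bound $\sqrt n\,|S_{C\to C^\ast}|$. From the $\Sstar_{C^\ast\to C}$ side, since $S_{C\to C^\ast}\subseteq V_C$ the count is at most $\sum_{w\in\Sstar_{C^\ast\to C}}\deg_C(w)$, and the defining inequality $\deg_C(w)<\deg_{C^\ast}(w)/\sqrt n$ for $w\in\Sstar_{C^\ast\to C}$, together with $\sum_{w\in V_{C^\ast}}\deg_{C^\ast}(w)=2|E_{C^\ast}|$, bounds this by $2|E_{C^\ast}|/\sqrt n$. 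Combining, the chain I would write down is
\[ \sqrt n\,|S_{C\to C^\ast}|\ \le\ \sum_{w\in\Sstar_{C^\ast\to C}}\deg_C(w)\ <\ \frac{1}{\sqrt n}\sum_{w\in\Sstar_{C^\ast\to C}}\deg_{C^\ast}(w)\ \le\ \frac{2|E_{C^\ast}|}{\sqrt n}, \]
so $|S_{C\to C^\ast}|\le 2|E_{C^\ast}|/n$. Since $|V_{C^\ast}|\le n$, the average degree $2|E_{C^\ast}|/|V_{C^\ast}|$ of $C^\ast$ is at least $2|E_{C^\ast}|/n\ge|S_{C\to C^\ast}|$; maximizing over $C$ gives the first half of the claim.

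For the second half I would note that the edges placed into $C^\ast$ at Step 1 of the third part are exactly edges with both endpoints inside $S_{C\to C^\ast}$ for some $C\in\Cset\setminus\{C^\ast\}$, hence $|E'|\le\sum_C\binom{|S_{C\to C^\ast}|}{2}\le\sum_C|S_{C\to C^\ast}|^2$. Because the clusters of $\Cset$ come from a single expander decomposition they partition $V$, so the sets $S_{C\to C^\ast}\subseteq V_C$ are pairwise disjoint and $\sum_C|S_{C\to C^\ast}|\le|V|=n$; substituting $|S_{C\to C^\ast}|\le 2|E_{C^\ast}|/n$ into one factor of each square then gives $|E'|\le(2|E_{C^\ast}|/n)\sum_C|S_{C\to C^\ast}|\le 2|E_{C^\ast}|$, and therefore $|E_{C^\ast}|/|V_{C^\ast}|\ge|E_{C^\ast}|/n\ge|E'|/(2n)=\Omega(|E'|/n)$.

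The only delicate point I anticipate is this disjointness step: $\Cstarset$ itself contains overlapping clusters (a node lies in $O(\log n)$ of them by Claim~\ref{claim:smalloverlap}), so if the sum ranged over all of $\Cstarset$ one could not bound $\sum_C|S_{C\to C^\ast}|$ by $n$; the argument genuinely relies on $C$ ranging only over the partition $\Cset$. Everything else is bookkeeping, and the factor-of-two slack in what ``average degree'' denotes ($|E_{C^\ast}|/|V_{C^\ast}|$ versus $2|E_{C^\ast}|/|V_{C^\ast}|$) is immaterial, since the claim is used only to verify the hypothesis $|E_{C^\ast}|/|V_{C^\ast}|=\Omega(|E'|/n)$ of Theorem~\ref{th:spars}.
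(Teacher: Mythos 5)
Your proof is correct and follows essentially the same route as the paper: the first half is the paper's own double-counting of $E(S_{C\to C^\ast},\Sstar_{C^\ast\to C})$, lower-bounded via the definition of $S_{C\to C^\ast}$ and upper-bounded via the definition of $\Sstar_{C^\ast\to C}$, yielding $|S_{C\to C^\ast}|=O(|E_{C^\ast}|/n)$. Your explicit derivation of the ``thus'' part---bounding $|E'|\le\sum_{C}|S_{C\to C^\ast}|^2$ and using that the sets $S_{C\to C^\ast}$ are pairwise disjoint because $C$ ranges over the partition $\Cset$ (not over the overlapping $\Cstarset$)---is exactly the step the paper leaves implicit, and you correctly identify why the disjointness is needed.
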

\begin{proof}
By the definition of $S_{C\to C^\ast}$, each node $v\in S_{C\to C^\ast}$ has at least $\sqrt{n}$ incident edges crossing $V_C$ and $\Sstar_{C^\ast\to C}$.
By the definition of $\Sstar_{C^\ast\to C}$, we can associate $\sqrt{n}$  edges in $E_{C^\ast}$ incident to $\Sstar_{C^\ast\to C}$ for each edge crossing $V_C$ and $\Sstar_{C^\ast \to C}$.
Therefore, the number of edges in $E_{C^\ast}$ incident to $\Sstar_{C^\ast \to C}$ is at least
\[
\sqrt{n} \cdot \sqrt{n} \cdot |S_{C\to C^\ast}| = n \cdot |S_{C\to C^\ast}|.
\]
Since this is true for all clusters $C \in \Cset \setminus \{C^\ast\}$, we  conclude that the average degree in $C^\ast$ is at least $\max_{C}\{|S_{C\to C^\ast}|\}$.
\end{proof}

To apply Theorem~\ref{th:spars}, we still need to have $|E_{C^\ast}|/|V_{C^\ast}| = \Omega(\sqrt{n})$. Unfortunately, we are unable to guarantee this inequality. However, if we restrict ourselves to the nodes in $V_{C^\ast}$ that have incident edges in $\bar E$, we can show that their average degree is $\Omega(\sqrt{n})$, which is also enough for us to use Theorem~\ref{th:spars} (see the remark just after the statement of Theorem~\ref{th:spars}). Here is the precise statement that we need.

\begin{claim}
For each $C^\ast \in \Cstarset$, each node $v\in C^\ast$ with incident edges in $\bar{E}$ has $\deg_{C^\ast}(v) = \Omega(\sqrt{n})$.
\end{claim}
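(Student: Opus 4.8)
The claim to establish is that for each $C^\ast \in \Cstarset$, any node $v \in V_{C^\ast}$ that is incident to at least one edge of $\bar E$ satisfies $\deg_{C^\ast}(v) = \Omega(\sqrt n)$. The plan is to unwind the definitions of $\bar E$ and of the sets $S_{C\to C^\ast}$ and $\Sstar_{C^\ast \to C}$. Recall that $\bar E = \bigcup_{C \in \Cset \setminus \{C^\ast\}} E(S_{C\to C^\ast}, \Sstar_{C^\ast\to C})$, so if $v$ has an incident edge in $\bar E$, then there is some cluster $C \in \Cset \setminus \{C^\ast\}$ for which $v$ is an endpoint of an edge in $E(S_{C\to C^\ast}, \Sstar_{C^\ast\to C})$. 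Since $v \in V_{C^\ast}$ and the edge crosses into $V_C$, the only possibility is $v \in \Sstar_{C^\ast\to C}$ (the other endpoint lying in $S_{C\to C^\ast} \subseteq V_C$).

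The key step is then to invoke the defining inequality of $\Sstar_{C^\ast\to C}$. By definition, $v \in \Sstar_{C^\ast\to C}$ means $1 \le \deg_C(v) < \deg_{C^\ast}(v)/\sqrt n$, which immediately gives $\deg_{C^\ast}(v) > \sqrt n \cdot \deg_C(v) \ge \sqrt n$, since $\deg_C(v) \ge 1$. That already yields $\deg_{C^\ast}(v) = \Omega(\sqrt n)$, as required. This is essentially a one-line consequence of the membership $v \in \Sstar_{C^\ast\to C}$, so I do not expect any real obstacle here; the only care needed is to correctly identify that an endpoint of a $\bar E$-edge lying in $V_{C^\ast}$ must be on the $\Sstar_{C^\ast\to C}$ side of the bipartition $E(S_{C\to C^\ast}, \Sstar_{C^\ast\to C})$.

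Having established this claim, together with Claim~\ref{claim2} and the two preceding claims bounding the number of $\bar E$-edges per node and the number of $E'$-edges received per node, the remark following Theorem~\ref{th:spars} applies: the average-degree hypotheses of Theorem~\ref{th:spars} need only hold when restricted to the nodes of $V_{C^\ast}$ incident to $\bar E$, and we have just shown these nodes all have degree $\Omega(\sqrt n)$ in $C^\ast$, while Claim~\ref{claim2} supplies the other required lower bound $|E_{C^\ast}|/|V_{C^\ast}| = \Omega(|E'|/n)$ (again, it suffices to have this on the relevant sub-population, which Claim~\ref{claim2}'s argument delivers since it counts only edges incident to $\Sstar_{C^\ast\to C}$). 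Thus Theorem~\ref{th:spars} can be legitimately applied in Step~2 of the third part, completing the listing of the remaining cross-cluster $K_4$. The main conceptual point of the whole section — and the only mildly delicate part — is not this claim itself but verifying that the ``restricted average degree'' version of Theorem~\ref{th:spars} genuinely suffices, i.e.\ that the low-degree nodes of $C^\ast$ not touched by $\bar E$ play no role in the case $p' \in \{2, 3\}$ and can be ignored, exactly as the remark after Theorem~\ref{th:spars} asserts.
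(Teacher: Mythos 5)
Your proof is correct and is essentially identical to the paper's: both identify that an endpoint $v \in V_{C^\ast}$ of a $\bar E$-edge must lie in $\Sstar_{C^\ast\to C}$ for some $C$, and then read off $\deg_{C^\ast}(v) > \sqrt{n}\cdot\deg_C(v) \ge \sqrt{n}$ from the defining inequality of $\Sstar_{C^\ast\to C}$. The surrounding discussion of how the claim feeds into the remark after Theorem~\ref{th:spars} is accurate but not part of the claim's proof.
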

\begin{proof}
By the definition of $\bar{E}$, we have $v \in \Sstar_{C^\ast\to C}$ for some $C \in \Cset \setminus \{C^\ast\}$. By the definition of $\Sstar_{C^\ast\to C}$, we have $\deg_{C^\ast}(v) > \sqrt{n} \cdot \deg_{C}(v) \geq \sqrt{n}$, as required.
\end{proof}

\paragraph*{Summary.}  The algorithm guarantees that as long as $C \in \Cset$ is not a low-average cluster w.r.t.~$E'$ in Part~1 and Part~2, then all $K_4$ with at least one edge in $C$ are listed by some node.

 By Claim~\ref{claim:lowAvgC} and the fact that the number of inter-cluster edges in an expander decomposition is at most $|E|/\polylog(n)$, there are no more than $(1/\polylog(n)+  1/\gamma +1/\gamma')|E|$ remaining edges if the average degree $\avgdeg$ of the graph is at least $\avgdeg \geq n^{1/2}$ at the beginning. Choosing   $\gamma=8$, and $\gamma'=8$, gives that there are at most $3(1/8)|E| < |E|/2$ remaining edges.
In case the average degree~$\avgdeg$ of the graph is bounded by $\avgdeg < n^{1/2}$, the initial exhaustive search procedure for nodes of degree at most $2n^{1/2}$ removes at least half of the nodes by Claim~\ref{claim:lowDeg}.
This implies that we complete within a logarithmic number of iterations, which proves that within  $\tilde{O}(n^{1/2})$ rounds we list all instances of~$K_4$.

\section*{Acknowledgements} 
The authors would like to thank Yuval Efron and Miel Sharf for helpful discussions, and Orr Fischer for elaborating upon~\cite{Eden+DISC19}.

This project was partially supported by the European Union’s Horizon 2020 Research and Innovation Programme under grant agreement no.~755839.
YC was supported by Dr.~Max R\"{o}ssler, by the Walter Haefner Foundation, and by the ETH Z\"{u}rich Foundation.
FLG was supported by JSPS KAKENHI grants Nos.~JP16H01705, JP19H04066, JP20H00579, JP20H04139 and by the MEXT Quantum Leap Flagship Program (MEXT Q-LEAP) grant No.~JPMXS0120319794.

\bibliography{References}

\begin{thebibliography}{10}

\bibitem{Abboud+arxiv17}
Amir Abboud, Keren Censor{-}Hillel, Seri Khoury, and Christoph Lenzen.
\newblock Fooling views: {A} new lower bound technique for distributed
  computations under congestion.
\newblock {\em Distributed Computing}, 33:545--559, 2020.
\newblock URL: \url{https://doi.org/10.1007/s00446-020-00373-4}, \href
  {http://dx.doi.org/10.1007/s00446-020-00373-4}
  {\path{doi:10.1007/s00446-020-00373-4}}.

\bibitem{Alon+SIDMA08}
Noga Alon, Tali Kaufman, Michael Krivelevich, and Dana Ron.
\newblock Testing triangle-freeness in general graphs.
\newblock {\em SIAM Journal on Discrete Mathematics}, 22(2):786--819, 2008.
\newblock \href {http://dx.doi.org/10.1137/07067917X}
  {\path{doi:10.1137/07067917X}}.

\bibitem{becchetti+KDD08}
Luca Becchetti, Paolo Boldi, Carlos Castillo, and Aristides Gionis.
\newblock Efficient semi-streaming algorithms for local triangle counting in
  massive graphs.
\newblock In {\em Proceedings of the 14th ACM SIGKDD International Conference
  on Knowledge Discovery and Data Mining (KDD 2008)}, pages 16--24, 2008.
\newblock \href {http://dx.doi.org/10.1145/1839490.1839494}
  {\path{doi:10.1145/1839490.1839494}}.

\bibitem{Censor+PODC20}
Keren Censor{-}Hillel, Fran{\c{c}}ois~Le Gall, and Dean Leitersdorf.
\newblock On distributed listing of cliques.
\newblock In {\em Proceedings of the {ACM} Symposium on Principles of
  Distributed Computing ({PODC} 2020)}, 2020.
\newblock \href {http://dx.doi.org/10.1145/3382734.3405742}
  {\path{doi:10.1145/3382734.3405742}}.

\bibitem{Censor-Hillel+DC19}
Keren Censor{-}Hillel, Petteri Kaski, Janne~H. Korhonen, Christoph Lenzen, Ami
  Paz, and Jukka Suomela.
\newblock Algebraic methods in the congested clique.
\newblock {\em Distributed Computing}, 32(6):461--478, 2019.
\newblock URL: \url{https://doi.org/10.1007/s00446-016-0270-2}, \href
  {http://dx.doi.org/10.1007/s00446-016-0270-2}
  {\path{doi:10.1007/s00446-016-0270-2}}.

\bibitem{Chang+SODA19}
Yi-Jun Chang, Seth Pettie, and Hengjie Zhang.
\newblock Distributed triangle detection via expander decomposition.
\newblock In {\em Proceedings of the 30th Annual ACM-SIAM Symposium on Discrete
  Algorithms (SODA 2019)}, pages 821--840, 2019.
\newblock \href {http://dx.doi.org/10.5555/3310435.3310486}
  {\path{doi:10.5555/3310435.3310486}}.

\bibitem{Chang+PODC19}
Yi{-}Jun Chang and Thatchaphol Saranurak.
\newblock Improved distributed expander decomposition and nearly optimal
  triangle enumeration.
\newblock In {\em Proceedings of the {ACM} Symposium on Principles of
  Distributed Computing ({PODC} 2019)}, pages 66--73, 2019.
\newblock URL: \url{https://doi.org/10.1145/3293611.3331618}, \href
  {http://dx.doi.org/10.1145/3293611.3331618}
  {\path{doi:10.1145/3293611.3331618}}.

\bibitem{Czumaj+DISC18}
Artur Czumaj and Christian Konrad.
\newblock Detecting cliques in {CONGEST} networks.
\newblock In {\em Proceedings of the 32nd International Symposium on
  Distributed Computing ({DISC} 2018)}, pages 16:1--16:15, 2018.
\newblock URL: \url{https://doi.org/10.4230/LIPIcs.DISC.2018.16}, \href
  {http://dx.doi.org/10.4230/LIPIcs.DISC.2018.16}
  {\path{doi:10.4230/LIPIcs.DISC.2018.16}}.

\bibitem{Drucker+PODC14}
Andrew Drucker, Fabian Kuhn, and Rotem Oshman.
\newblock On the power of the congested clique model.
\newblock In {\em Proceedings of the {ACM} Symposium on Principles of
  Distributed Computing ({PODC} 2014)}, pages 367--376, 2014.
\newblock URL: \url{https://doi.org/10.1145/2611462.2611493}, \href
  {http://dx.doi.org/10.1145/2611462.2611493}
  {\path{doi:10.1145/2611462.2611493}}.

\bibitem{Eden+DISC19}
Talya Eden, Nimrod Fiat, Orr Fischer, Fabian Kuhn, and Rotem Oshman.
\newblock {Sublinear-time distributed algorithms for detecting small cliques
  and even cycles}.
\newblock In {\em Proceedings of the 33rd International Symposium on
  Distributed Computing (DISC 2019)}, pages 15:1--15:16, 2019.
\newblock URL: \url{https://doi.org/10.4230/LIPIcs.DISC.2019.15}, \href
  {http://dx.doi.org/10.4230/LIPIcs.DISC.2019.15}
  {\path{doi:10.4230/LIPIcs.DISC.2019.15}}.

\bibitem{Eden+SICOMP17}
Talya Eden, Amit Levi, Dana Ron, and C.~Seshadhri.
\newblock Approximately counting triangles in sublinear time.
\newblock {\em SIAM Journal on Computing}, 46(5):1603--1646, 2017.
\newblock \href {http://dx.doi.org/10.1137/15M1054389}
  {\path{doi:10.1137/15M1054389}}.

\bibitem{Even+DISC17}
Guy Even, Orr Fischer, Pierre Fraigniaud, Tzlil Gonen, Reut Levi, Moti Medina,
  Pedro Montealegre, Dennis Olivetti, Rotem Oshman, Ivan Rapaport, and Ioan
  Todinca.
\newblock Three notes on distributed property testing.
\newblock In {\em Proceedings of the 31st International Symposium on
  Distributed Computing ({DISC} 2017)}, pages 15:1--15:30, 2017.
\newblock URL: \url{https://doi.org/10.4230/LIPIcs.DISC.2017.15}, \href
  {http://dx.doi.org/10.4230/LIPIcs.DISC.2017.15}
  {\path{doi:10.4230/LIPIcs.DISC.2017.15}}.

\bibitem{Fischer+SPAA18}
Orr Fischer, Tzlil Gonen, Fabian Kuhn, and Rotem Oshman.
\newblock Possibilities and impossibilities for distributed subgraph detection.
\newblock In {\em Proceedings of the 30th Symposium on Parallelism in
  Algorithms and Architectures ({SPAA} 2018)}, pages 153--162, 2018.
\newblock URL: \url{https://doi.org/10.1145/3210377.3210401}, \href
  {http://dx.doi.org/10.1145/3210377.3210401}
  {\path{doi:10.1145/3210377.3210401}}.

\bibitem{Ghaffari+PODC17}
Mohsen Ghaffari, Fabian Kuhn, and Hsin-Hao Su.
\newblock Distributed {MST} and routing in almost mixing time.
\newblock In {\em Proceedings of the ACM Symposium on Principles of Distributed
  Computing (PODC 2017)}, pages 131--140, 2017.
\newblock URL: \url{https://doi.org/10.1145/3087801.3087827}, \href
  {http://dx.doi.org/10.1145/3087801.3087827}
  {\path{doi:10.1145/3087801.3087827}}.

\bibitem{Ghaffari+DISC18}
Mohsen Ghaffari and Jason Li.
\newblock {New distributed algorithms in almost mixing time via transformations
  from parallel algorithms}.
\newblock In {\em Proceedings of the 32nd International Symposium on
  Distributed Computing (DISC 2018)}, pages 31:1--31:16, 2018.
\newblock URL: \url{https://doi.org/10.4230/LIPIcs.DISC.2018.31}, \href
  {http://dx.doi.org/10.4230/LIPIcs.DISC.2018.31}
  {\path{doi:10.4230/LIPIcs.DISC.2018.31}}.

\bibitem{Gonen+OPODIS17}
Tzlil Gonen and Rotem Oshman.
\newblock Lower bounds for subgraph detection in the {CONGEST} model.
\newblock In {\em Proceedings of the 21st International Conference on
  Principles of Distributed Systems ({OPODIS} 2017)}, pages 6:1--6:16, 2017.
\newblock URL: \url{https://doi.org/10.4230/LIPIcs.OPODIS.2017.6}, \href
  {http://dx.doi.org/10.4230/LIPIcs.OPODIS.2017.6}
  {\path{doi:10.4230/LIPIcs.OPODIS.2017.6}}.

\bibitem{hu+JCSS16}
Xiaocheng Hu, Miao Qiao, and Yufei Tao.
\newblock {I/O}-efficient join dependency testing, {Loomis}--{Whitney} join,
  and triangle enumeration.
\newblock {\em Journal of Computer and System Sciences}, 82(8):1300--1315,
  2016.
\newblock \href {http://dx.doi.org/10.1016/j.jcss.2016.05.005}
  {\path{doi:10.1016/j.jcss.2016.05.005}}.

\bibitem{Huang+SODA20}
Dawei Huang, Seth Pettie, Yixiang Zhang, and Zhijun Zhang.
\newblock The communication complexity of set intersection and multiple
  equality testing.
\newblock In {\em Proceedings of the Fourteenth Annual ACM-SIAM Symposium on
  Discrete Algorithms}, pages 1715--1732, 2020.
\newblock \href {http://dx.doi.org/10.1137/1.9781611975994.105}
  {\path{doi:10.1137/1.9781611975994.105}}.

\bibitem{Izumi+PODC17}
Taisuke Izumi and Fran{\c{c}}ois~Le Gall.
\newblock Triangle finding and listing in {CONGEST} networks.
\newblock In {\em Proceedings of the {ACM} Symposium on Principles of
  Distributed Computing ({PODC} 2017)}, pages 381--389, 2017.
\newblock URL: \url{https://doi.org/10.1145/3087801.3087811}, \href
  {http://dx.doi.org/10.1145/3087801.3087811}
  {\path{doi:10.1145/3087801.3087811}}.

\bibitem{Izumi+STACS20}
Taisuke Izumi, Fran{\c{c}}ois {Le Gall}, and Fr{\'{e}}d{\'{e}}ric Magniez.
\newblock Quantum distributed algorithm for triangle finding in the {CONGEST}
  model.
\newblock In {\em Proceedings of the 37th International Symposium on
  Theoretical Aspects of Computer Science (STACS 2020)}, pages 23:1--23:13,
  2020.
\newblock URL: \url{https://doi.org/10.4230/LIPIcs.STACS.2019.49}, \href
  {http://dx.doi.org/10.4230/LIPIcs.STACS.2019.49}
  {\path{doi:10.4230/LIPIcs.STACS.2019.49}}.

\bibitem{JerrumSICOMP89}
Mark Jerrum and Alistair Sinclair.
\newblock Approximating the permanent.
\newblock {\em SIAM Journal on Computing}, 18(6):1149--1178, 1989.
\newblock \href {http://dx.doi.org/10.1137/0218077}
  {\path{doi:10.1137/0218077}}.

\bibitem{Korhonen+OPODIS17}
Janne~H. Korhonen and Joel Rybicki.
\newblock Deterministic subgraph detection in broadcast {CONGEST}.
\newblock In {\em Proceedings of the 21st International Conference on
  Principles of Distributed Systems ({OPODIS} 2017)}, pages 4:1--4:16, 2017.
\newblock URL: \url{https://doi.org/10.4230/LIPIcs.OPODIS.2017.4}, \href
  {http://dx.doi.org/10.4230/LIPIcs.OPODIS.2017.4}
  {\path{doi:10.4230/LIPIcs.OPODIS.2017.4}}.

\bibitem{LeGall+FOCS14}
Fran{\c{c}}ois Le~Gall.
\newblock Improved quantum algorithm for triangle finding via combinatorial
  arguments.
\newblock In {\em Proceedings of the 55th Annual IEEE Symposium on Foundations
  of Computer Science (FOCS 2014)}, pages 216--225, 2014.
\newblock \href {http://dx.doi.org/10.1109/FOCS.2014.31}
  {\path{doi:10.1109/FOCS.2014.31}}.

\bibitem{Pandurangan+SPAA18}
Gopal Pandurangan, Peter Robinson, and Michele Scquizzato.
\newblock On the distributed complexity of large-scale graph computations.
\newblock In {\em Proceedings of the 30th on Symposium on Parallelism in
  Algorithms and Architectures (SPAA 2018)}, pages 405--414, 2018.
\newblock \href {http://dx.doi.org/10.1145/3210377.3210409}
  {\path{doi:10.1145/3210377.3210409}}.

\bibitem{Shun+IDCE15}
Julian Shun and Kanat Tangwongsan.
\newblock Multicore triangle computations without tuning.
\newblock In {\em Proceedings of the 31st IEEE International Conference on Data
  Engineering (ICAD 2015)}, pages 149--160, 2015.
\newblock \href {http://dx.doi.org/10.1109/ICDE.2015.7113280}
  {\path{doi:10.1109/ICDE.2015.7113280}}.

\end{thebibliography}

\end{document}